\newcolumntype{R}{>{\raggedleft\arraybackslash}X}
\newcolumntype{L}{>{\raggedright\arraybackslash}X}
\newcolumntype{C}{>{\centering\arraybackslash}X}
\newcolumntype{A}{>{\columncolor{gray!25}}C}
\newcolumntype{a}{>{\columncolor{gray!25}}c}
\newcolumntype{.}{D{.}{.}{-1}}
\renewcommand\p@subfigure{\arabic{figure}.}
\newenvironment{customlegend}[1][]{%
	\begingroup
	\csname pgfplots@init@cleared@structures\endcsname
	\pgfplotsset{#1}%
    }{%
	\csname pgfplots@createlegend\endcsname
	\endgroup
    }%
\def\addlegendimage{\csname pgfplots@addlegendimage\endcsname}
\setlist[itemize]{leftmargin=3\parindent}
\setlist[enumerate]{leftmargin=2\parindent}
\theoremstyle{plain}
\newtheorem{observation}{Observation}
\newtheorem{proposition}{Proposition}
\theoremstyle{definition}
\newtheorem{example}{Example}
\theoremstyle{remark}
\newtheorem{remark}{Remark}
\def\keywords{\vspace{.5em} 
{\textit{Keywords}:\,\relax%
}}
\def\JEL{\vspace{.5em} 
{\textbf{\emph{JEL} classification number}:\,\relax%
}}
\def\AMS{\vspace{.5em} 
{\textbf{AMS classification number}:\,}}
\author{L\'aszl\'o Csat\'o\thanks{~e-mail: laszlo.csato@uni-corvinus.hu} }
\affil{Department of Operations Research and Actuarial Sciences \\ Corvinus University of Budapest \\ MTA-BCE 'Lend\"ulet' Strategic Interactions Research Group \\ Budapest, Hungary}
\title{Between plurality and proportionality: an analysis of vote transfer systems\thanks{~We are grateful to Daniel Bochsler, Tam\'as Solymosi, Bal\'azs Sziklai and Attila Tasn\'adi for their comments and suggestions. \newline
The research was supported by OTKA grant K 111797.}}
\date{\today}
\begin{document}

\maketitle

\begin{abstract}
The paper considers a general model of electoral systems combining district-based elections with a compensatory mechanism in order to create any outcome between strictly majoritarian and purely proportional seat allocation. It contains vote transfer and allows for the application of three different correction formulas. Analysis in a two-party system shows that a trade-off exists for the dominant party between its expected seat share and its chance of obtaining majority. Vote transfer rules are also investigated by focusing on the possibility of manipulation.

The model is applied to the 2014 Hungarian parliamentary election. Hypothetical results reveal that the vote transfer rule cannot be evaluated in itself, only together with the share of constituency seats. With an appropriate choice of the latter, the three mechanisms may be functionally equivalent.

\keywords{Electoral system; mixed-member system; vote transfer; two-party system; Hungary}
\end{abstract}

\JEL{D72}

\AMS{91B12}

\section{Introduction}

Mixed-member electoral systems combine direct election of representatives with the aim of a proportional (or, at least more proportional) seat allocation. In this category, most of the attention is devoted to mixed-member proportional (MMP) electoral systems \citep{ShugartWattenberg2001}, while vote transfer systems -- applied in Hungary (from 1990) and Italy (from 1993 to 2005) -- receive much less consideration.
Such systems are based on single-member districts and party lists, but they contain some compensation mechanism (through the transfer of votes that do not yield candidates or votes in excess of what is needed to win a seat) in order to avoid seemingly unfair outcomes.
They also have two advantages over common MMP rules: vote transfer systems seem to be more simple and intuitive (for example, there is no need for overhang seats) and they are immune to some kind of strategic manipulation such as tactical voting (vote-splitting) \citep{Bochsler2014, Bochsler2015}. Hungary has used vote transfer system over seven national elections but no such party strategies have emerged.

Vote transfer systems can be classified by their compensation mechanisms, which may significantly influence both the seat allocation and the electoral strategies of the parties. Hungarian electoral rules were fundamentally rewritten in 2012, after the governing party alliance FIDESZ-KDNP won a two-thirds (super)majority in the 2010 election: the number of single-member districts were reduced (this tier has been evaluated by \citet{BiroKoczySziklai2015}), and the complicated proportional representation pillar was simplified.
While vote transfer mechanism remained an essential part of the system, their calculation was modified, implying some interesting questions like:
Was the change favorable for the party that initiated the reform?
Is it possible to compare vote transfer systems without the use of former election results?
What are the advantages and disadvantages of different vote transfer formulas?

This paper attempts to address these issues by defining a model of vote transfer systems, which leads to a seat allocation between pure proportional and strictly majoritarian outcomes from a formal, theoretical perspective. The transition between the two extremities is governed by the ratio of compensatory mandates. A similar model is provided by \citet{Bochsler2014}, but it assess the proportionality of positive vote transfer systems and do not deal with the study of different vote transfer formulas.
Besides the quantitative analysis, an empirical comparison is possible on the basis of the results of the Hungarian parliamentary election in 2014. Similar hypothetical scenarios have been investigated by \citet{OrtonaOttonePonzano2008}, for example.

The paper is structured as follows. Section \ref{Sec2} presents the model and Section \ref{Sec3} discusses its properties in a two-party system. Since there is no chance to derive robust results in the case of a more complicated party structure, Section \ref{Sec4} compares the formal model with the Hungarian electoral system, and Section \ref{Sec5} plugs the results of the Hungarian parliamentary election in 2014 into the model in order to derive alternative outcomes. Finally, Section \ref{Sec6} summarizes the main findings in three observations.

\section{The model} \label{Sec2}

Vote transfer systems implement seat allocation in two tiers. One part of mandates are allocated in single-member constituencies under majority (first-past-the-post) rule, while compensatory seats serve for reaching proportionality. Votes cast for candidates who failed to become elected may not be wasted as transferred to the second tier. Theoretically, this system is able to deliver a proportional seat allocation, however, it depends not only on the weight of the second tier but on the behavior of parties and voters \citep{Bochsler2014}. It may occur that the same electoral rule results in an over-representation of large parties in some situations, but leads to an over-representation of small parties in other cases.

Thus vote transfer systems are usually not wholly proportional in practice. We think it is not necessarily against the intention of the makers of electoral rules: they rather want to create a parliament without too fragmented party structure (a potential failure of truly proportional systems), and, at the same time, avoid the domination of majority.

A mathematical model of general vote transfer systems should be able to give any seat allocation a seat allocation between pure proportional and strictly majoritarian electoral systems.
It can be achieved as follows:
\begin{itemize}[label=$\bullet$]
\item
We have a mixed-member electoral system with single-member constituencies and list seats. All voter has only one vote.\footnote{~Another interpretation can be that there are two votes on separate ballots for each voter. However, it leads to the emergence of vote-splitting.} All local candidates are associated with a party. Voters are identified by their party vote.
\item
In single-member constituencies the candidate of the party with most votes wins. Districts have an equal size.
\item
List seats are allocated according to the proportional rule. The model is continuous, there is no need for special apportionment rules.
\item
The allocation of list seats is based on the aggregated number of votes for each party's candidates plus optional correction votes from single-member constituencies. Three different transfer formulas are investigated:
\begin{enumerate}[label=\alph*)]
\item
\emph{Direct vote transfer} (DVT): there are no correction votes;
\item
\emph{Positive vote transfer} (PVT): losing candidates of parties in constituencies add these votes to their own list votes;
\item
\emph{Negative vote transfer} (NVT): votes that are not 'used' in constituencies are added to list votes, i.e. not only the votes for losing candidates but the votes of the winning candidate minus the votes of the second candidate.\footnote{~Direct refers to the feature that votes are converted into list votes without correction. Positive and negative vote transfer are more common expressions used by \citet{Bochsler2014}, among others.}
\end{enumerate}
\item
Constituencies and list seats can be combined by an arbitrary ratio. Share of constituency seats is $\alpha \in \left[ 0,1 \right]$.
\item
There exists no threshold for the parties to pass in order to be eligible for list seats.
\end{itemize}

Example \ref{Examp1} illustrates this electoral system.

\begin{example} \label{Examp1}
Let two parties ($A$ and $B$) and two districts be in the country with the following vote distribution:
\begin{center}
\begin{tabularx}{0.6\textwidth}{LCC} \toprule
          & Party $A$     & Party $B$ \\ \midrule
    Constituency 1 & 65\%  & 35\% \\
    Constituency 2 & 45\%  & 55\% \\ \midrule
    National & 55\%  & 45\% \\ \bottomrule
\end{tabularx}
\end{center}

Note that national support for the parties is the average of their vote share in the constituencies since districts have the same number of voters. Each party has won one direct mandate, party $A$ has obtained Constituency 1 and party $B$ has gained Constituency 2.

Let $\alpha = 0.6$. Then seat allocation according to the three transfer formulas is as follows:
\begin{center}
    \begin{tabularx}{\textwidth}{llCC} \toprule
    Method      & Mandate share in & Party $A$ & Party $B$ \\ \midrule
    \multirow{4}{*}{DVT}    & Constituencies & $50\%$      & $50\%$ \\
          & List -- direct & $55\%$      & $45\%$ \\
          & List -- normalized  & $55\%$      & $45\%$ \\
          & \textbf{Total} & $\mathbf{52}$\textbf{\%}      & $\mathbf{48}$\textbf{\%} \\ \midrule
    \multirow{5}{*}{PVT}    & Constituencies & $50\%$      & $50\%$ \\
          & List -- direct & $55\%$      & $45\%$ \\
          & List -- losing votes & $0.5 \times 45\% = 22.5\%$      & $0.5 \times 35\% = 17.5\%$ \\
          & List -- normalized  & $77.5\% / 140\% \approx 55.36\%$      & $62.5\% / 140\% \approx 44.64\%$ \\
          & \textbf{Total} & $\mathbf{\approx 52.14}$\textbf{\%}      & $\mathbf{\approx 47.86}$\textbf{\%} \\ \midrule
    \multirow{6}{*}{NVT}    & Constituencies & $50\%$      & $50\%$ \\
          & List -- direct & $55\%$      & $45\%$ \\
          & List -- losing votes & $0.5 \times 45\% = 22.5\%$      & $0.5 \times 35\% = 17.5\%$ \\
          & List -- winning votes & $0.5 \times (65\%-35\%) = 15\%$      & $0.5 \times (55\%-45\%) = 5\%$ \\
          & List -- normalized  & $92.5\% / 160\% = 57.8125\%$      & $67.5\% / 160\% = 42.1875\%$ \\
          & \textbf{Total} & $\mathbf{53.125}$\textbf{\%}      & $\mathbf{46.875}$\textbf{\%} \\ \bottomrule
\end{tabularx}
\end{center}

Calculation of list votes may require some comments. In the case of DVT, it is simply the number of votes across all constituencies. In the case of PVT, some correction (indirect) votes are added to this pool, namely, votes in districts that do not win a seat, which is $0.5 \times 35\%$ for party $B$ in constituency 1 and $0.5 \times 45\%$ for party $A$ in constituency 2 as both districts contain the half of all voters. Thus the total number of list votes exceeds the number of voters, so they should be normalized.
In the case of NVT, calculation of correction votes is more complicated since all 'unused' votes count. For example, it is $0.5 \times 35\%$ for party $B$ (the loser) and $0.5 \times (65\% - 35\%)$ for party $A$ (the winner) in Constituency 1.

Constituency and normalized list vote shares are aggregated as a weighted sum governed by parameter $\alpha$. For instance, party $B$ obtains $0.6 \times 50\% + 0.4 \times 42.1875\% = 46.875\%$ of mandates under NVT. 
\end{example}

Note that party $A$ benefits from a smaller $\alpha$ in all cases since it is under-represented in the share of constituencies won.
List votes somewhat adjusts this disproportion and approximate to the national share of votes. DVT, PVT and NVT gradually increase the role of aggregated votes, i.e. the ratio of parliamentary seats is closer to 55:45\%. However, their complexity also rises: in DVT, list votes are not influenced by the distribution of votes among the constituencies; in PVT, list votes depend on the number of losing votes, that is, which party has won; and in NVT, list votes are affected by both the winner and the second party.

\section{Study of the model} \label{Sec3}

This section investigates the model in the case of two parties $A$ and $B$, and discusses the possibility to generalize the results as well as the potential of different electoral strategies.

\subsection{Analytical results} \label{Sec31}

Let the vote share of party $A$ be fixed at $x \in \left( 0.5; \, 1 \right]$. Therefore it will be called to the \emph{dominant} party and the other party $B$ is called to the \emph{inferior} party. The result is deterministic if $\alpha = 0$. However, voters of party $A$ may be distributed arbitrarily among the constituencies.
In two extreme cases the outcome can be derived mathematically.

The first is when party $A$ has the opportunity to gerrymander district boundaries. Since it has more than 50\% of votes at the national level, all constituencies will be won by party $A$. List votes are as follows:
\begin{center}
	\begin{tabularx}{\textwidth}{lCC} \toprule
          & Party $A$ & Party $B$ \\ \midrule
    Direct & $x$     & $1-x$ \\
    Losing votes & $0$     & $1-x$ \\
    Winning votes & $x-(1-x)=2x-1$ & $0$ \\ \midrule
    List normalized (DVT) & $x$      & $1-x$ \\
    List normalized (PVT) & $x / (2-x)$    	& $(2-2x) / (2-x)$ \\
    List normalized (NVT) & $(3x-1) / (1+x)$   & $(2-2x) / (1+x)$ \\ \bottomrule
	\end{tabularx}
\end{center}

It implies the following result.

\begin{proposition} \label{Prop1}
Consider a two-party system where the dominant party has a gerrymandering power.
The dominant party's preference order is DVT $\succ$ NVT $\succ$ PVT.
DVT and NVT are always more favorable for the dominant party than a proportional rule, while PVT is better if the ratio of constituency seats is at least the half of its vote share.
\end{proposition}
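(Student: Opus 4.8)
The plan is to reduce every assertion to a comparison of affine functions of $\alpha$. Under gerrymandering the dominant party wins every constituency, so its constituency share is $1$, and for each method $M$ its total seat share is the weighted average $T_M(\alpha) = \alpha + (1-\alpha)\, s_M$, where $s_M$ is the normalized list share read directly off the table: $s_{\mathrm{DVT}} = x$, $s_{\mathrm{PVT}} = x/(2-x)$, and $s_{\mathrm{NVT}} = (3x-1)/(1+x)$. Once these three expressions are in hand, the whole proof is elementary.

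For the preference order I would first note that the constituency term $\alpha$ is identical across methods, so $T_M(\alpha) - T_{M'}(\alpha) = (1-\alpha)\,(s_M - s_{M'})$; the ordering is thus independent of $\alpha$ and governed entirely by the list shares. Clearing denominators (all positive for $x \in (0.5,1]$), I expect the numerators to factor cleanly: $s_{\mathrm{DVT}} - s_{\mathrm{NVT}}$ proportional to $(1-x)^2 \ge 0$, and $s_{\mathrm{NVT}} - s_{\mathrm{PVT}}$ proportional to $(2x-1)(1-x) \ge 0$ on the relevant interval. This gives $s_{\mathrm{DVT}} \ge s_{\mathrm{NVT}} \ge s_{\mathrm{PVT}}$, strict on $(0.5,1)$, hence DVT $\succ$ NVT $\succ$ PVT. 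A denominator-free version of the same fact: relative to DVT, positive transfer adds only party $B$'s losing votes to the pool, which can only shrink $A$'s share, whereas negative transfer additionally returns $A$'s winning margin, partially undoing that loss, so the shares must line up in the stated order.

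For the comparison with proportional allocation (which gives $A$ exactly $x$), I would compute $T_M(\alpha) - x$ and expect each difference to factor as $(1-x)$ times an affine function of $\alpha$. For DVT this should be $(1-x)\,\alpha \ge 0$, so DVT dominates proportional for every $\alpha$. For PVT the affine factor should be $2\alpha - x$, nonnegative exactly when $\alpha \ge x/2$ — precisely the stated threshold of half the vote share.

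The step I expect to carry the real weight is the NVT-versus-proportional comparison, because it is the one case where the list tier itself under-represents the dominant party (we have $s_{\mathrm{NVT}} \le x$), so it is not obvious a priori that the constituency over-representation compensates. The conclusion hinges on the sign of the factored affine term, which I anticipate to be $2\alpha - (1-x)$. Since $x > 1/2$ forces $(1-x)/2 < x/2$, this threshold is strictly milder than PVT's, consistent with NVT $\succ$ PVT; confirming the sign of this term over the intended range of $\alpha$, together with the boundary case $x=1$ where all three list shares collapse to $1$, is the only place where genuine care is needed.
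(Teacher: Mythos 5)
Your decomposition $T_M(\alpha)=\alpha+(1-\alpha)s_M$ and the three list shares agree with the paper's table, and your anticipated factorizations are all correct: $s_{\mathrm{DVT}}-s_{\mathrm{NVT}}=(1-x)^2/(1+x)$, $s_{\mathrm{NVT}}-s_{\mathrm{PVT}}=2(2x-1)(1-x)/\bigl[(1+x)(2-x)\bigr]$, $T_{\mathrm{DVT}}-x=\alpha(1-x)$, and $T_{\mathrm{PVT}}-x=(1-x)(2\alpha-x)/(2-x)$. This is essentially the paper's own route: the paper proves NVT $\succ$ PVT by exactly the mediant observation you restate in denominator-free form, proves DVT $\succeq$ NVT via $(x-1)^2\ge 0$, and handles the proportional comparisons by factoring the same quadratics. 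On the preference order and on the DVT and PVT clauses, your argument and the paper's coincide.

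The step you flagged as "the only place where genuine care is needed" is precisely where your computation and the paper's proof part ways, and your computation is the correct one. The factorization gives $T_{\mathrm{NVT}}-x=(1-x)\bigl(2\alpha-(1-x)\bigr)/(1+x)$, so under NVT the dominant party beats the proportional rule iff $\alpha\ge(1-x)/2$ (or $x=1$) --- \emph{not} always. The clause "DVT and NVT are always more favorable" is therefore false for NVT: with $\alpha=0.1$ and $x=0.6$ the NVT seat share is $0.1+0.9\cdot 0.5=0.55<0.6$. The paper's proof of this clause rests on an algebra slip: the quadratic $x^2-(2-2\alpha)x+(1-2\alpha)$ has roots $1$ and $1-2\alpha$ (their sum is $2-2\alpha$, their product $1-2\alpha$), but the paper records the second root as $-2\alpha$, which makes the condition look vacuous and yields "always". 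So you should not hedge with "the intended range of $\alpha$" --- the model allows any $\alpha\in[0,1]$, and the honest conclusion of your own algebra is that the proposition needs amending: NVT dominates proportional iff $\alpha\ge(1-x)/2$, a strictly milder threshold than PVT's $x/2$ (consistent with NVT $\succ$ PVT), but not an empty one. State that corrected threshold explicitly; as written, the statement cannot be proved because it is not true.
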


\begin{proof}
Note that
\[
x \geq \frac{x}{1 + (1-x)} = \frac{x}{2-x} \quad \text{and} \quad \frac{3x-1}{1+x} = \frac{x + (2x-1)}{(2-x) + (2x-1)} > \frac{x}{2-x},
\]
so DVT and NVT are more favorable for party $A$ than PVT. Moreover,
\[
x \geq \frac{3x-1}{1+x} \Leftrightarrow (x-1)^2 \geq 0,
\]
hence party $A$ has the preference order DVT $\succeq$ NVT $\succ$ PVT with a strict relation if $x < 1$.

According to DVT, $\alpha + (1-\alpha)x \geq x$. By applying PVT:
\[
\alpha + (1-\alpha) \frac{x}{2-x} \geq x \Leftrightarrow x^2 - (1+2\alpha)x + 2\alpha \geq 0 \Leftrightarrow \left[ x \geq 1 \quad \text{or} \quad x \leq 2\alpha \right].
\]
It means that the share of constituency seats should be at least the half of party $A$'s vote share to provide that it benefits from the system compared to the proportional rule.
By applying NVT:
\[
\alpha + (1-\alpha) \frac{3x+1}{1+x} \geq x \Leftrightarrow 0 \geq x^2 - (2-2\alpha)x + (1-2\alpha) \Leftrightarrow 1 \geq x \geq -2\alpha.
\]
NVT is always more favorable for party $A$ than a proportional rule.
\end{proof}

The second extremity is when party $B$ has the power to draw district boundaries in order to win them marginally. Then the ratio of constituencies won is $2(1-x)$ as party $A$ has fewer voters in each of them than party $B$. List votes are as follows:
\begin{center}
	\begin{tabularx}{\textwidth}{lCC} \toprule
          & Party $A$ & Party $B$ \\ \midrule
    Direct & $x$     & $1-x$ \\
    Losing votes & $1-x$     & $0$ \\
    Winning votes & $x-(1-x)=2x-1$ & $0$ \\ \midrule
    List normalized (DVT) & $x$      & $1-x$ \\
    List normalized (PVT) & $1 / (2-x)$    	& $(1-x) / (2-x)$ \\
    List normalized (NVT) & $2x / (1+x)$   & $(1-x) / (1+x)$ \\ \bottomrule
	\end{tabularx}
\end{center}

It implies the following result, analogous to Proposition \ref{Prop1}.

\begin{proposition} \label{Prop2}
Consider a two-party system where the inferior party has a gerrymandering power.
The dominant party's preference order is NVT $\succ$ PVT $\succ$ DVT.
DVT is never more favorable for the dominant party than a proportional rule, while PVT and NVT are better only if the share of constituency seats is small (one-third is an upper limit).
\end{proposition}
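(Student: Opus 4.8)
The plan is to mirror the proof of Proposition \ref{Prop1}, exploiting the fact that the total seat share of the dominant party decomposes as $\alpha(2x-1) + (1-\alpha)\cdot(\text{list normalized share})$, where the constituency term $\alpha(2x-1)$ and the weight $\alpha$ are common to all three formulas. Consequently, comparing two formulas reduces to comparing the list-normalized shares of party $A$ taken from the table, and comparing a given formula with the proportional rule reduces to comparing the total seat share with $x$.

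First, to establish the preference order, I would prove the chain
\[
\frac{2x}{1+x} > \frac{1}{2-x} > x \qquad \text{for } x \in (0.5,1).
\]
After clearing the positive denominators $1+x$ and $2-x$, the left inequality reduces to $(2x-1)(1-x) > 0$ and the right one to $(x-1)^2 > 0$, both evident on $(0.5,1)$. Since $1-\alpha \geq 0$, these inequalities transfer to the total seat shares and yield NVT $\succ$ PVT $\succ$ DVT, strictly whenever $\alpha < 1$ and $x < 1$, with all three formulas coinciding at $\alpha = 1$.

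Next, for DVT versus the proportional rule, I would simply rewrite its total seat share as $\alpha(2x-1) + (1-\alpha)x = x + \alpha(x-1)$. Because $x-1 \leq 0$ and $\alpha \geq 0$, this never exceeds $x$, so DVT is never more favorable than proportional allocation.

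Finally, for PVT and NVT, I would solve the inequality ``total $\geq x$'' for $\alpha$. Clearing the positive denominator and collecting the $\alpha$-terms, the coefficient of $\alpha$ factors as $-(2x-3)(x-1)$ for PVT and $(2x+1)(x-1)$ for NVT, both of which are \emph{negative} on $(0.5,1)$; dividing by them reverses the inequality and gives the thresholds $\alpha \leq \tfrac{1-x}{3-2x}$ (PVT) and $\alpha \leq \tfrac{x}{2x+1}$ (NVT). It then remains to bound both thresholds by $\tfrac13$ on $(0.5,1]$: differentiation shows the PVT threshold is decreasing from $\tfrac14$ at $x=0.5$ toward $0$, while the NVT threshold is increasing toward $\tfrac13$ as $x \to 1$. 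The main obstacle here is purely one of bookkeeping rather than ideas---one must track the sign of the $\alpha$-coefficient and of the factored quadratics so that each inequality points in the correct direction.
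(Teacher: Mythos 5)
Your proposal is correct and is, in structure, the paper's own proof: the same decomposition of the total seat share into a common constituency term $\alpha(2x-1)$ plus $(1-\alpha)$ times the normalized list share, the same pairwise comparison of the list shares $x$, $\tfrac{1}{2-x}$, $\tfrac{2x}{1+x}$ to get NVT $\succ$ PVT $\succ$ DVT, the same one-line identity for DVT, and the same thresholds $\tfrac{1-x}{3-2x}$ and $\tfrac{x}{1+2x}$ with the bounds $\tfrac14$ and $\tfrac13$. The one genuine difference is how you extract the thresholds: the paper keeps each inequality as a quadratic in $x$, e.g. $(1-2\alpha)x^2-(2-5\alpha)x+(1-3\alpha)\geq 0$, factors out the root $x=1$, and reads off the second root --- a step whose direction secretly depends on the sign of $1-2\alpha$, which is why the paper must append the side condition $\alpha\leq 1/2$ in the NVT case and leaves the $\alpha>1/2$ case for PVT implicit. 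Your treatment of the same inequality as \emph{linear in} $\alpha$, with coefficient $-(2x-3)(x-1)$ for PVT and $(2x+1)(x-1)$ for NVT (both verified negative on $(0.5,1)$, so the division flips the inequality), reaches the identical thresholds in one pass and with no case distinction on $\alpha$; this is marginally cleaner at exactly the point where the paper is least careful. The final bounding of the thresholds (you by differentiation, the paper by algebraic rewriting of $\tfrac{1-x}{3-2x}$ and $\tfrac{x}{1+2x}$) is equivalent content, and your added remark that strictness of the preference order also requires $\alpha<1$ is a small precision the paper omits.
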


\begin{proof}
Note that
\[
x \leq \frac{x + (1-x)}{1 + (1-x)} = \frac{1}{2-x} \leq \frac{1 + (2x-1)}{2-x + (2x-1)} = \frac{2x}{1 + x},
\]
so party $A$ has the preference order NVT $\succeq$ PVT $\succeq$ DVT with a strict relation if $x < 1$.

According to DVT, $\alpha(2x-1) + (1-\alpha)x \leq x$ as $2x-1 \leq x$. By applying PVT:
\[
\alpha (2x-1) + (1-\alpha) \frac{1}{2-x} \geq x \Leftrightarrow (1-2\alpha)x^2 - (2-5\alpha)x + (1-3\alpha) \geq 0 \Leftrightarrow
\]
\[
\Leftrightarrow \left[ x \geq 1 \quad \text{or} \quad x \leq \frac{1-3\alpha}{1-2\alpha} \right].
\]
It means that 
\[
\alpha \leq \frac{1-x}{3-2x} = \frac{1}{3} - \frac{x}{9-6x} \leq \frac{1}{3} - \frac{1/2}{9-3} = \frac{1}{4},
\]
thus PVT is certainly not more favorable for party $A$ than a proportional rule if the share of constituency seats is at least 25\%.
By applying NVT:
\[
\alpha (2x-1) + (1-\alpha) \frac{2x}{1+x} \geq x \Leftrightarrow 0 \geq (1-2\alpha)x^2 - (1-\alpha)x + \alpha \Leftrightarrow 1 \geq x \geq \frac{\alpha}{1-2\alpha}.
\]
Hence NVT is more favorable for party $A$ than a proportional rule if $\alpha \leq 1/2$ and
\[
\alpha \leq \frac{x}{1+2x} = \frac{1}{2} - \frac{1}{2+4x} \leq \frac{1}{2} - \frac{1}{2+4} = \frac{1}{3}.
\]
\end{proof}

\begin{remark} \label{Rem1}
Consider a two-party system where the inferior party has a gerrymandering power.
The necessary condition for $\alpha$ to provide that the dominant party benefits from the system compared to the proportional rule is decreasing in $x$ for PVT (but never exceeds $1/4$) and increasing for NVT (but never exceeds $1/3$).
\end{remark}

\begin{figure}[htbp]
\centering
\caption{Seat-vote ratios, extreme cases}
\label{Fig1}

\begin{subfigure}{\textwidth}
	\centering
	\caption{$\alpha = 0.3$}
	\label{Fig1a}
	
\begin{tikzpicture}
\begin{axis}[width=\textwidth, 
height=0.6\textwidth,
xmin=0.475,
xmax=1.025,
ymax=1.025,
xlabel = Vote share of party $A$ ($x$),
ylabel = Mandate share of party $A$,
legend entries={Proportional$\quad$,DVT($A$)$\quad$,PVT($A$)$\quad$,NVT($A$)$\quad$,DVT($B$),PVT($B$)$\quad$,NVT($B$)$\quad$},
legend style={at={(0.5,-0.15)},anchor = north,legend columns = 4}
]

\addplot[black,dashed,smooth,very thick] coordinates {
(0.5,0.5)
(0.51,0.51)
(0.52,0.52)
(0.53,0.53)
(0.54,0.54)
(0.55,0.55)
(0.56,0.56)
(0.57,0.57)
(0.58,0.58)
(0.59,0.59)
(0.6,0.6)
(0.61,0.61)
(0.62,0.62)
(0.63,0.63)
(0.64,0.64)
(0.65,0.65)
(0.66,0.66)
(0.67,0.67)
(0.68,0.68)
(0.69,0.69)
(0.7,0.7)
(0.71,0.71)
(0.72,0.72)
(0.73,0.73)
(0.74,0.74)
(0.75,0.75)
(0.76,0.76)
(0.77,0.77)
(0.78,0.78)
(0.79,0.79)
(0.8,0.8)
(0.81,0.81)
(0.82,0.82)
(0.83,0.83)
(0.84,0.84)
(0.85,0.85)
(0.86,0.86)
(0.87,0.87)
(0.88,0.88)
(0.89,0.89)
(0.9,0.9)
(0.91,0.91)
(0.92,0.92)
(0.93,0.93)
(0.94,0.94)
(0.95,0.95)
(0.96,0.96)
(0.97,0.97)
(0.98,0.98)
(0.99,0.99)
(1,1)
};

\addplot[red,smooth,very thick] coordinates {
(0.5,0.65)
(0.51,0.657)
(0.52,0.664)
(0.53,0.671)
(0.54,0.678)
(0.55,0.685)
(0.56,0.692)
(0.57,0.699)
(0.58,0.706)
(0.59,0.713)
(0.6,0.72)
(0.61,0.727)
(0.62,0.734)
(0.63,0.741)
(0.64,0.748)
(0.65,0.755)
(0.66,0.762)
(0.67,0.769)
(0.68,0.776)
(0.69,0.783)
(0.7,0.79)
(0.71,0.797)
(0.72,0.804)
(0.73,0.811)
(0.74,0.818)
(0.75,0.825)
(0.76,0.832)
(0.77,0.839)
(0.78,0.846)
(0.79,0.853)
(0.8,0.86)
(0.81,0.867)
(0.82,0.874)
(0.83,0.881)
(0.84,0.888)
(0.85,0.895)
(0.86,0.902)
(0.87,0.909)
(0.88,0.916)
(0.89,0.923)
(0.9,0.93)
(0.91,0.937)
(0.92,0.944)
(0.93,0.951)
(0.94,0.958)
(0.95,0.965)
(0.96,0.972)
(0.97,0.979)
(0.98,0.986)
(0.99,0.993)
(1,1)
}
node [pos=0.05,pin={[pin edge={red}, pin distance=0.25cm] 90:{DVT($A$)}}] {};

\addplot[brown,smooth,very thick] coordinates {
(0.5,0.533333333333333)
(0.51,0.539597315436242)
(0.52,0.545945945945946)
(0.53,0.552380952380952)
(0.54,0.558904109589041)
(0.55,0.56551724137931)
(0.56,0.572222222222222)
(0.57,0.579020979020979)
(0.58,0.585915492957746)
(0.59,0.59290780141844)
(0.6,0.6)
(0.61,0.607194244604316)
(0.62,0.614492753623188)
(0.63,0.621897810218978)
(0.64,0.629411764705882)
(0.65,0.637037037037037)
(0.66,0.644776119402985)
(0.67,0.652631578947368)
(0.68,0.660606060606061)
(0.69,0.668702290076336)
(0.7,0.676923076923077)
(0.71,0.685271317829457)
(0.72,0.69375)
(0.73,0.702362204724409)
(0.74,0.711111111111111)
(0.75,0.72)
(0.76,0.729032258064516)
(0.77,0.738211382113821)
(0.78,0.747540983606557)
(0.79,0.75702479338843)
(0.8,0.766666666666667)
(0.81,0.776470588235294)
(0.82,0.786440677966102)
(0.83,0.796581196581197)
(0.84,0.806896551724138)
(0.85,0.817391304347826)
(0.86,0.828070175438596)
(0.87,0.838938053097345)
(0.88,0.85)
(0.89,0.861261261261261)
(0.9,0.872727272727273)
(0.91,0.884403669724771)
(0.92,0.896296296296296)
(0.93,0.908411214953271)
(0.94,0.920754716981132)
(0.95,0.933333333333333)
(0.96,0.946153846153846)
(0.97,0.959223300970874)
(0.98,0.972549019607843)
(0.99,0.986138613861386)
(1,1)
}
node [pos=0.2,pin={[pin edge={brown}, pin distance=1.5cm] 90:{PVT($A$)}}] {};

\addplot[green,smooth,very thick] coordinates {
(0.5,0.533333333333333)
(0.51,0.545695364238411)
(0.52,0.557894736842105)
(0.53,0.569934640522876)
(0.54,0.581818181818182)
(0.55,0.593548387096774)
(0.56,0.605128205128205)
(0.57,0.61656050955414)
(0.58,0.627848101265823)
(0.59,0.638993710691824)
(0.6,0.65)
(0.61,0.660869565217391)
(0.62,0.671604938271605)
(0.63,0.682208588957055)
(0.64,0.692682926829268)
(0.65,0.703030303030303)
(0.66,0.713253012048193)
(0.67,0.723353293413174)
(0.68,0.733333333333333)
(0.69,0.743195266272189)
(0.7,0.752941176470588)
(0.71,0.762573099415205)
(0.72,0.772093023255814)
(0.73,0.78150289017341)
(0.74,0.790804597701149)
(0.75,0.8)
(0.76,0.809090909090909)
(0.77,0.818079096045198)
(0.78,0.826966292134831)
(0.79,0.835754189944134)
(0.8,0.844444444444445)
(0.81,0.853038674033149)
(0.82,0.861538461538462)
(0.83,0.869945355191257)
(0.84,0.878260869565217)
(0.85,0.886486486486486)
(0.86,0.894623655913978)
(0.87,0.902673796791444)
(0.88,0.91063829787234)
(0.89,0.918518518518518)
(0.9,0.926315789473684)
(0.91,0.934031413612565)
(0.92,0.941666666666667)
(0.93,0.949222797927461)
(0.94,0.956701030927835)
(0.95,0.964102564102564)
(0.96,0.971428571428571)
(0.97,0.978680203045685)
(0.98,0.985858585858586)
(0.99,0.992964824120603)
(1,1)
}
node [pos=0.5,pin={[pin edge={green}, pin distance=0.5cm] 90:{NVT($A$)}}] {};

\addplot[purple,smooth,very thick] coordinates {
(0.5,0.35)
(0.51,0.363)
(0.52,0.376)
(0.53,0.389)
(0.54,0.402)
(0.55,0.415)
(0.56,0.428)
(0.57,0.441)
(0.58,0.454)
(0.59,0.467)
(0.6,0.48)
(0.61,0.493)
(0.62,0.506)
(0.63,0.519)
(0.64,0.532)
(0.65,0.545)
(0.66,0.558)
(0.67,0.571)
(0.68,0.584)
(0.69,0.597)
(0.7,0.61)
(0.71,0.623)
(0.72,0.636)
(0.73,0.649)
(0.74,0.662)
(0.75,0.675)
(0.76,0.688)
(0.77,0.701)
(0.78,0.714)
(0.79,0.727)
(0.8,0.74)
(0.81,0.753)
(0.82,0.766)
(0.83,0.779)
(0.84,0.792)
(0.85,0.805)
(0.86,0.818)
(0.87,0.831)
(0.88,0.844)
(0.89,0.857)
(0.9,0.87)
(0.91,0.883)
(0.92,0.896)
(0.93,0.909)
(0.94,0.922)
(0.95,0.935)
(0.96,0.948)
(0.97,0.961)
(0.98,0.974)
(0.99,0.987)
(1,1)
}
node [pos=0.05,pin={[pin edge={purple}, pin distance=0.25cm] 270:{DVT($B$)}}] {};

\addplot[orange,smooth,very thick] coordinates {
(0.5,0.466666666666667)
(0.51,0.475798657718121)
(0.52,0.484972972972973)
(0.53,0.494190476190476)
(0.54,0.50345205479452)
(0.55,0.512758620689655)
(0.56,0.522111111111111)
(0.57,0.531510489510489)
(0.58,0.540957746478873)
(0.59,0.55045390070922)
(0.6,0.56)
(0.61,0.569597122302158)
(0.62,0.579246376811594)
(0.63,0.588948905109489)
(0.64,0.598705882352941)
(0.65,0.608518518518518)
(0.66,0.618388059701493)
(0.67,0.628315789473684)
(0.68,0.63830303030303)
(0.69,0.648351145038168)
(0.7,0.658461538461538)
(0.71,0.668635658914729)
(0.72,0.678875)
(0.73,0.689181102362205)
(0.74,0.699555555555555)
(0.75,0.71)
(0.76,0.720516129032258)
(0.77,0.731105691056911)
(0.78,0.741770491803279)
(0.79,0.752512396694215)
(0.8,0.763333333333333)
(0.81,0.774235294117647)
(0.82,0.785220338983051)
(0.83,0.796290598290598)
(0.84,0.807448275862069)
(0.85,0.818695652173913)
(0.86,0.830035087719298)
(0.87,0.841469026548673)
(0.88,0.853)
(0.89,0.864630630630631)
(0.9,0.876363636363636)
(0.91,0.888201834862385)
(0.92,0.900148148148148)
(0.93,0.912205607476636)
(0.94,0.924377358490566)
(0.95,0.936666666666667)
(0.96,0.949076923076923)
(0.97,0.961611650485437)
(0.98,0.974274509803921)
(0.99,0.987069306930693)
(1,1)
}
node [pos=0.2,pin={[pin edge={orange}, pin distance=1cm] 270:{PVT($B$)}}] {};

\addplot[blue,smooth,very thick] coordinates {
(0.5,0.466666666666667)
(0.51,0.478847682119205)
(0.52,0.490947368421053)
(0.53,0.502967320261438)
(0.54,0.514909090909091)
(0.55,0.526774193548387)
(0.56,0.538564102564103)
(0.57,0.55028025477707)
(0.58,0.561924050632911)
(0.59,0.573496855345912)
(0.6,0.585)
(0.61,0.596434782608696)
(0.62,0.607802469135802)
(0.63,0.619104294478528)
(0.64,0.630341463414634)
(0.65,0.641515151515151)
(0.66,0.652626506024096)
(0.67,0.663676646706587)
(0.68,0.674666666666667)
(0.69,0.685597633136095)
(0.7,0.696470588235294)
(0.71,0.707286549707602)
(0.72,0.718046511627907)
(0.73,0.728751445086705)
(0.74,0.739402298850575)
(0.75,0.75)
(0.76,0.760545454545455)
(0.77,0.771039548022599)
(0.78,0.781483146067416)
(0.79,0.791877094972067)
(0.8,0.802222222222222)
(0.81,0.812519337016575)
(0.82,0.822769230769231)
(0.83,0.832972677595628)
(0.84,0.843130434782609)
(0.85,0.853243243243243)
(0.86,0.863311827956989)
(0.87,0.873336898395722)
(0.88,0.88331914893617)
(0.89,0.893259259259259)
(0.9,0.903157894736842)
(0.91,0.913015706806283)
(0.92,0.922833333333333)
(0.93,0.932611398963731)
(0.94,0.942350515463917)
(0.95,0.952051282051282)
(0.96,0.961714285714286)
(0.97,0.971340101522842)
(0.98,0.980929292929293)
(0.99,0.990482412060302)
(1,1)
}
node [pos=0.5,pin={[pin edge={blue}, pin distance=1.5cm] 270:{NVT($B$)}}] {};
\end{axis}
\end{tikzpicture}
\end{subfigure}
\vspace{0.5cm}

\begin{subfigure}{\textwidth}
	\centering
	\caption{$\alpha = 0.6$}
	\label{Fig1b}
	
\begin{tikzpicture}
\begin{axis}[width=\textwidth, 
height=0.6\textwidth,
xmin=0.475,
xmax=1.025,
ymax=1.025,
xlabel = Vote share of party $A$ ($x$),
ylabel = Mandate share of party $A$,
legend entries={Proportional$\quad$,DVT($A$)$\quad$,PVT($A$)$\quad$,NVT($A$)$\quad$,DVT($B$),PVT($B$)$\quad$,NVT($B$)$\quad$},
legend style={at={(0.5,-0.15)},anchor = north,legend columns = 4}
]

\addplot[black,dashed,smooth,very thick] coordinates {
(0.5,0.5)
(0.51,0.51)
(0.52,0.52)
(0.53,0.53)
(0.54,0.54)
(0.55,0.55)
(0.56,0.56)
(0.57,0.57)
(0.58,0.58)
(0.59,0.59)
(0.6,0.6)
(0.61,0.61)
(0.62,0.62)
(0.63,0.63)
(0.64,0.64)
(0.65,0.65)
(0.66,0.66)
(0.67,0.67)
(0.68,0.68)
(0.69,0.69)
(0.7,0.7)
(0.71,0.71)
(0.72,0.72)
(0.73,0.73)
(0.74,0.74)
(0.75,0.75)
(0.76,0.76)
(0.77,0.77)
(0.78,0.78)
(0.79,0.79)
(0.8,0.8)
(0.81,0.81)
(0.82,0.82)
(0.83,0.83)
(0.84,0.84)
(0.85,0.85)
(0.86,0.86)
(0.87,0.87)
(0.88,0.88)
(0.89,0.89)
(0.9,0.9)
(0.91,0.91)
(0.92,0.92)
(0.93,0.93)
(0.94,0.94)
(0.95,0.95)
(0.96,0.96)
(0.97,0.97)
(0.98,0.98)
(0.99,0.99)
(1,1)
};

\addplot[red,smooth,very thick] coordinates {
(0.5,0.8)
(0.51,0.804)
(0.52,0.808)
(0.53,0.812)
(0.54,0.816)
(0.55,0.82)
(0.56,0.824)
(0.57,0.828)
(0.58,0.832)
(0.59,0.836)
(0.6,0.84)
(0.61,0.844)
(0.62,0.848)
(0.63,0.852)
(0.64,0.856)
(0.65,0.86)
(0.66,0.864)
(0.67,0.868)
(0.68,0.872)
(0.69,0.876)
(0.7,0.88)
(0.71,0.884)
(0.72,0.888)
(0.73,0.892)
(0.74,0.896)
(0.75,0.9)
(0.76,0.904)
(0.77,0.908)
(0.78,0.912)
(0.79,0.916)
(0.8,0.92)
(0.81,0.924)
(0.82,0.928)
(0.83,0.932)
(0.84,0.936)
(0.85,0.94)
(0.86,0.944)
(0.87,0.948)
(0.88,0.952)
(0.89,0.956)
(0.9,0.96)
(0.91,0.964)
(0.92,0.968)
(0.93,0.972)
(0.94,0.976)
(0.95,0.98)
(0.96,0.984)
(0.97,0.988)
(0.98,0.992)
(0.99,0.996)
(1,1)
}
node [pos=0.05,pin={[pin edge={red}, pin distance=0cm] 90:{DVT($A$)}}] {};

\addplot[brown,smooth,very thick] coordinates {
(0.5,0.733333333333333)
(0.51,0.736912751677852)
(0.52,0.740540540540541)
(0.53,0.74421768707483)
(0.54,0.747945205479452)
(0.55,0.751724137931034)
(0.56,0.755555555555556)
(0.57,0.759440559440559)
(0.58,0.763380281690141)
(0.59,0.767375886524823)
(0.6,0.771428571428571)
(0.61,0.775539568345324)
(0.62,0.779710144927536)
(0.63,0.783941605839416)
(0.64,0.788235294117647)
(0.65,0.792592592592593)
(0.66,0.797014925373134)
(0.67,0.801503759398496)
(0.68,0.806060606060606)
(0.69,0.810687022900763)
(0.7,0.815384615384615)
(0.71,0.82015503875969)
(0.72,0.825)
(0.73,0.82992125984252)
(0.74,0.834920634920635)
(0.75,0.84)
(0.76,0.845161290322581)
(0.77,0.850406504065041)
(0.78,0.855737704918033)
(0.79,0.861157024793388)
(0.8,0.866666666666667)
(0.81,0.872268907563025)
(0.82,0.877966101694915)
(0.83,0.883760683760684)
(0.84,0.889655172413793)
(0.85,0.895652173913044)
(0.86,0.901754385964912)
(0.87,0.907964601769912)
(0.88,0.914285714285714)
(0.89,0.920720720720721)
(0.9,0.927272727272727)
(0.91,0.93394495412844)
(0.92,0.940740740740741)
(0.93,0.947663551401869)
(0.94,0.954716981132075)
(0.95,0.961904761904762)
(0.96,0.969230769230769)
(0.97,0.976699029126214)
(0.98,0.984313725490196)
(0.99,0.992079207920792)
(1,1)
}
node [pos=0.2,pin={[pin edge={brown}, pin distance=0.25cm] 270:{PVT($A$)}}] {};

\addplot[green,smooth,very thick] coordinates {
(0.5,0.733333333333333)
(0.51,0.740397350993377)
(0.52,0.747368421052632)
(0.53,0.754248366013072)
(0.54,0.761038961038961)
(0.55,0.767741935483871)
(0.56,0.774358974358974)
(0.57,0.780891719745223)
(0.58,0.787341772151899)
(0.59,0.793710691823899)
(0.6,0.8)
(0.61,0.806211180124224)
(0.62,0.812345679012346)
(0.63,0.81840490797546)
(0.64,0.824390243902439)
(0.65,0.83030303030303)
(0.66,0.836144578313253)
(0.67,0.841916167664671)
(0.68,0.847619047619048)
(0.69,0.853254437869822)
(0.7,0.858823529411765)
(0.71,0.864327485380117)
(0.72,0.869767441860465)
(0.73,0.87514450867052)
(0.74,0.880459770114942)
(0.75,0.885714285714286)
(0.76,0.890909090909091)
(0.77,0.896045197740113)
(0.78,0.901123595505618)
(0.79,0.906145251396648)
(0.8,0.911111111111111)
(0.81,0.916022099447514)
(0.82,0.920879120879121)
(0.83,0.92568306010929)
(0.84,0.930434782608696)
(0.85,0.935135135135135)
(0.86,0.939784946236559)
(0.87,0.944385026737968)
(0.88,0.948936170212766)
(0.89,0.953439153439153)
(0.9,0.957894736842105)
(0.91,0.962303664921466)
(0.92,0.966666666666667)
(0.93,0.970984455958549)
(0.94,0.975257731958763)
(0.95,0.979487179487179)
(0.96,0.983673469387755)
(0.97,0.987817258883249)
(0.98,0.991919191919192)
(0.99,0.995979899497487)
(1,1)
}
node [pos=0.4,pin={[pin edge={green}, pin distance=0.25cm] 90:{NVT($A$)}}] {};

\addplot[purple,smooth,very thick] coordinates {
(0.5,0.2)
(0.51,0.216)
(0.52,0.232)
(0.53,0.248)
(0.54,0.264)
(0.55,0.28)
(0.56,0.296)
(0.57,0.312)
(0.58,0.328)
(0.59,0.344)
(0.6,0.36)
(0.61,0.376)
(0.62,0.392)
(0.63,0.408)
(0.64,0.424)
(0.65,0.44)
(0.66,0.456)
(0.67,0.472)
(0.68,0.488)
(0.69,0.504)
(0.7,0.52)
(0.71,0.536)
(0.72,0.552)
(0.73,0.568)
(0.74,0.584)
(0.75,0.6)
(0.76,0.616)
(0.77,0.632)
(0.78,0.648)
(0.79,0.664)
(0.8,0.68)
(0.81,0.696)
(0.82,0.712)
(0.83,0.728)
(0.84,0.744)
(0.85,0.76)
(0.86,0.776)
(0.87,0.792)
(0.88,0.808)
(0.89,0.824)
(0.9,0.84)
(0.91,0.856)
(0.92,0.872)
(0.93,0.888)
(0.94,0.904)
(0.95,0.92)
(0.96,0.936)
(0.97,0.952)
(0.98,0.968)
(0.99,0.984)
(1,1)
}
node [pos=0.05,pin={[pin edge={purple}, pin distance=0.25cm] 270:{DVT($B$)}}] {};

\addplot[orange,smooth,very thick] coordinates {
(0.5,0.266666666666667)
(0.51,0.280456375838926)
(0.52,0.29427027027027)
(0.53,0.308108843537415)
(0.54,0.321972602739726)
(0.55,0.335862068965517)
(0.56,0.349777777777778)
(0.57,0.36372027972028)
(0.58,0.37769014084507)
(0.59,0.391687943262411)
(0.6,0.405714285714286)
(0.61,0.419769784172662)
(0.62,0.433855072463768)
(0.63,0.447970802919708)
(0.64,0.462117647058824)
(0.65,0.476296296296296)
(0.66,0.490507462686567)
(0.67,0.504751879699248)
(0.68,0.519030303030303)
(0.69,0.533343511450382)
(0.7,0.547692307692308)
(0.71,0.562077519379845)
(0.72,0.5765)
(0.73,0.59096062992126)
(0.74,0.605460317460317)
(0.75,0.62)
(0.76,0.63458064516129)
(0.77,0.64920325203252)
(0.78,0.663868852459016)
(0.79,0.678578512396694)
(0.8,0.693333333333333)
(0.81,0.708134453781513)
(0.82,0.722983050847458)
(0.83,0.737880341880342)
(0.84,0.752827586206896)
(0.85,0.767826086956522)
(0.86,0.782877192982456)
(0.87,0.797982300884956)
(0.88,0.813142857142857)
(0.89,0.82836036036036)
(0.9,0.843636363636364)
(0.91,0.85897247706422)
(0.92,0.87437037037037)
(0.93,0.889831775700935)
(0.94,0.905358490566038)
(0.95,0.920952380952381)
(0.96,0.936615384615385)
(0.97,0.952349514563107)
(0.98,0.968156862745098)
(0.99,0.984039603960396)
(1,1)
}
node [pos=0.2,pin={[pin edge={orange}, pin distance=0.75cm] 270:{PVT($B$)}}] {};

\addplot[blue,smooth,very thick] coordinates {
(0.5,0.266666666666667)
(0.51,0.282198675496689)
(0.52,0.297684210526316)
(0.53,0.313124183006536)
(0.54,0.328519480519481)
(0.55,0.343870967741936)
(0.56,0.359179487179487)
(0.57,0.374445859872611)
(0.58,0.389670886075949)
(0.59,0.40485534591195)
(0.6,0.42)
(0.61,0.435105590062112)
(0.62,0.450172839506173)
(0.63,0.46520245398773)
(0.64,0.480195121951219)
(0.65,0.495151515151515)
(0.66,0.510072289156627)
(0.67,0.524958083832335)
(0.68,0.539809523809524)
(0.69,0.554627218934911)
(0.7,0.569411764705882)
(0.71,0.584163742690058)
(0.72,0.598883720930232)
(0.73,0.61357225433526)
(0.74,0.628229885057471)
(0.75,0.642857142857143)
(0.76,0.657454545454546)
(0.77,0.672022598870057)
(0.78,0.686561797752809)
(0.79,0.701072625698324)
(0.8,0.715555555555556)
(0.81,0.730011049723757)
(0.82,0.74443956043956)
(0.83,0.758841530054645)
(0.84,0.773217391304348)
(0.85,0.787567567567568)
(0.86,0.80189247311828)
(0.87,0.816192513368984)
(0.88,0.830468085106383)
(0.89,0.844719576719577)
(0.9,0.858947368421053)
(0.91,0.873151832460733)
(0.92,0.887333333333333)
(0.93,0.901492227979275)
(0.94,0.915628865979381)
(0.95,0.92974358974359)
(0.96,0.943836734693878)
(0.97,0.957908629441624)
(0.98,0.971959595959596)
(0.99,0.985989949748744)
(1,1)
}
node [pos=0.5,pin={[pin edge={blue}, pin distance=0.25cm] 182:{NVT($B$)}}] {};
\end{axis}
\end{tikzpicture}
\end{subfigure}
\end{figure}

Analytical results are depicted on Figure \ref{Fig1} where the share ofparty $A$'s votes and seats are plotted in various scenarios according to some given $\alpha$. The curves represent the proportional share, and the six cases given by three transfer formulas and two gerrymandering parties. For example, PVT($A$) means that party $A$ draws the boundaries of constituencies and formula PVT is applied.

$\alpha = 0.3$ on Figure \ref{Fig1a}, which is a relatively low value. If party $A$ has a gerrymandering power, it's preference order is DVT $\succ$ NVT $\succ$ PVT according to Proposition \ref{Prop1}. PVT is worse than the proportional rule when $x > 0.6$.
If the other party $B$ has a gerrymandering power, party $A$ has the preference order of NVT $\succ$ DVT $\succ$ DVT according to Proposition \ref{Prop2}. NVT is better than the proportional rule when $x > 0.75$ despite the unfavorable constituency map. It can also be seen that NVT($B$), then PVT($B$), and finally, DVT($B$) exceeds NVT($A$) as $x$ becomes greater.

The curves are much closer to each other on Figure \ref{Fig1a} than on Figure \ref{Fig1b} with $\alpha = 0.6$.
Here the picture is more clear: party $A$ is over-represented if it has a gerrymandering power but under-represented if the other party draws district boundaries. Electoral systems used in practice are closer to Figure \ref{Fig1b} as single-member constituencies usually form the dominant part of the system.

Naturally, it is almost impossible that a party has such an extreme gerrymandering power in practice. However, the analysis of extreme cases show some tendencies: DVT may result in the most divergent outcomes, and NVT may be always better than PVT for the dominant party.

\subsection{Simulation results} \label{Sec32}

In the general case, when no party gerrymanders district boundaries, there is no hope for analytical results even if there is only two parties as the number of constituencies won depend on the territorial distribution of votes.
In order to get some insight we have carried out an investigation by simulations. It is assumed that there are $100$ constituencies and $x$ (the vote share of party $A$) has a continuous uniform distribution governed by two parameters: the expected value $k$ and the length $2h$, that is, it has a minimum of $k-h$ and a maximum of $k+h$.
We have made ten thousand runs in each simulation.

\begin{table}[!ht]
  \centering
  \caption{Number of runs when party $A$ obtains majority}
  \label{Table1}
    \begin{tabularx}{\textwidth}{l CCCCC} \toprule
    Value of $k$     & 0.51  & 0.52  & 0.53  & 0.54  & 0.55 \\ \midrule
    Under DVT, PVT and NVT & 7 762  & 9 344  & 9 890  & 9 980  & 9 999 \\
    Under DVT and PVT & 3     & 0     & 0     & 0     & 0 \\
    Under DVT and NVT & 0     & 0     & 0     & 0     & 0 \\
    Under PVT and NVT & 111   & 104   & 28    & 10    & 0 \\
    Under DVT & 1     & 0     & 0     & 0     & 0 \\
    Under PVT & 0     & 0     & 0     & 0     & 0 \\
    Under NVT & 100   & 86    & 22    & 3     & 1 \\
    Under no transfer formulas  & 2 023  & 466   & 60    & 7     & 0 \\ \bottomrule
    \end{tabularx}
\end{table}

The case $\alpha = 0.5$ and $h = 0.15$ for various values of $k$ is detailed in Table \ref{Table1}. The columns show how many runs have resulted in a majority for (the dominating) party $A$ according to all combination of the three vote transfer mechanisms (so their sum is ten thousand). PVT is not worth to consider since both DVT and NVT dominates it. NVT seems to be the best choice for party $A$, it leads to a win in significantly more cases than DVT. The reason is obvious: for a strong party with more than half of the votes in expected value, the list win is almost provided. However, local fluctuations may lead to a loss of some constituencies and in this way to a loss of the whole election. Then it should happen despite a robust win in some constituencies, from which party $A$ can benefit if formula NVT is applied.

\begin{figure}[htbp]
\centering
\caption{Comparison of DVT and NVT by simulation}
\label{Fig2}

\begin{subfigure}{0.48\textwidth}
	\centering
	\caption{$\alpha = 0.5$; $h = 0.15$}
	\label{Fig2a}    
\begin{tikzpicture}
\begin{axis}[width=0.95\textwidth, 
height=0.75\textwidth,
xlabel={Value of $k$ (expected value)},
axis y line*=left, 
ymin=0,
ymax=280,
ybar,
bar width = 6pt,
]

\addplot [color=red,fill] coordinates{
(0.51,4)
(0.52,0)
(0.53,0)
(0.54,0)
(0.55,0)
};

\addplot [color=blue,fill] coordinates{
(0.51,211)
(0.52,190)
(0.53,50)
(0.54,13)
(0.55,1)
};
\end{axis}
    
\begin{axis}[width=0.95\textwidth, 
height=0.75\textwidth,
axis x line*=none,
axis y line*=right,
hide x axis,
ymin=0.495,
ymax=0.645,
]
    
\addplot [color=red, very thick] coordinates { 
(0.51,0.521507719511076)
(0.52,0.543458474056405)
(0.53,0.565258851844435)
(0.54,0.586929987830437)
(0.55,0.607977009787712)
};

\addplot [color=blue, very thick] coordinates { 
(0.51,0.519135351581145)
(0.52,0.538728957385145)
(0.53,0.558113131109244)
(0.54,0.577443395461685)
(0.55,0.596200345098264)
};
\end{axis}
\end{tikzpicture}
\end{subfigure}
\begin{subfigure}{0.48\textwidth}
	\centering
	\caption{$\alpha = 0.5$; $h = 0.25$}
	\label{Fig2b}    
\begin{tikzpicture}
\begin{axis}[width=0.95\textwidth, 
height=0.75\textwidth,
xlabel={Value of $k$ (expected value)},
axis y line*=left, 
ymin=0,
ymax=280,
ybar,
bar width = 6pt,
]

\addplot [color=red,fill] coordinates{
(0.51,32)
(0.52,6)
(0.53,0)
(0.54,0)
(0.55,0)
(0.56,0)
(0.57,0)
(0.58,0)
};

\addplot [color=blue,fill] coordinates{
(0.51,239)
(0.52,270)
(0.53,223)
(0.54,109)
(0.55,37)
(0.56,13)
(0.57,6)
(0.58,1)
};
\end{axis}
    
\begin{axis}[width=0.95\textwidth, 
height=0.75\textwidth,
axis x line*=none,
axis y line*=right,
hide x axis,
ymin=0.495,
ymax=0.645,
]
    
\addplot [color=red, very thick] coordinates { 
(0.51,0.515194389167173)
(0.52,0.529966510109367)
(0.53,0.545419431582978)
(0.54,0.559694984086483)
(0.55,0.575047303893167)
(0.56,0.590178441696361)
(0.57,0.605060210065755)
(0.58,0.619646146880532)
};

\addplot [color=blue, very thick] coordinates { 
(0.51,0.514769589806261)
(0.52,0.529210685144854)
(0.53,0.544170270805085)
(0.54,0.558156608753017)
(0.55,0.573001760260099)
(0.56,0.587699363031428)
(0.57,0.602184431784387)
(0.58,0.61631739270502)
};
\end{axis}
\end{tikzpicture}
\end{subfigure}
\vspace{0.15cm} 
\rule{0ex}{10ex}
\begin{subfigure}{0.48\textwidth}
	\centering
	\caption{$\alpha = 0.7$; $h = 0.15$}
	\label{Fig2c}    
\begin{tikzpicture}
\begin{axis}[width=0.95\textwidth, 
height=0.75\textwidth,
xlabel={Value of $k$ (expected value)},
axis y line*=left, 
ymin=0,
ymax=140,
ybar,
bar width = 6pt,
]

\addplot [color=red,fill] coordinates{
(0.51,0)
(0.52,0)
(0.53,0)
(0.54,0)
};

\addplot [color=blue,fill] coordinates{
(0.51,16)
(0.52,34)
(0.53,42)
(0.54,9)
};
\end{axis}
    
\begin{axis}[width=0.95\textwidth, 
height=0.75\textwidth,
axis x line*=none,
axis y line*=right,
hide x axis,
ymin=0.495,
ymax=0.645,
]
    
\addplot [color=red, very thick] coordinates { 
(0.51,0.526140519883258)
(0.52,0.552113545654132)
(0.53,0.578728810909075)
(0.54,0.605483924679401)
};

\addplot [color=blue, very thick] coordinates { 
(0.51,0.524747459428755)
(0.52,0.549325146858464)
(0.53,0.574491999741061)
(0.54,0.599797304628761)
};
\end{axis}
\end{tikzpicture}
\end{subfigure}
\begin{subfigure}{0.48\textwidth}
	\centering
	\caption{$\alpha = 0.7$; $h = 0.25$}
	\label{Fig2d}    
\begin{tikzpicture}
\begin{axis}[width=0.95\textwidth, 
height=0.75\textwidth,
xlabel={Value of $k$ (expected value)},
axis y line*=left, 
ymin=0,
ymax=140,
ybar,
bar width = 6pt,
]

\addplot [color=red,fill] coordinates{
(0.51,2)
(0.52,0)
(0.53,0)
(0.54,0)
(0.55,0)
(0.56,0)
(0.57,0)
(0.58,0)
};

\addplot [color=blue,fill] coordinates{
(0.51,48)
(0.52,89)
(0.53,124)
(0.54,89)
(0.55,33)
(0.56,16)
(0.57,2)
(0.58,2)
};
\end{axis}
    
\begin{axis}[width=0.95\textwidth, 
height=0.75\textwidth,
axis x line*=none,
axis y line*=right,
hide x axis,
ymin=0.495,
ymax=0.645,
]
    
\addplot [color=red, very thick] coordinates { 
(0.51,0.516736896881601)
(0.52,0.534112182384335)
(0.53,0.550400478377325)
(0.54,0.568629689916328)
(0.55,0.585614674155906)
(0.56,0.602024575786073)
(0.57,0.619215748582437)
(0.58,0.636350846940165)
};

\addplot [color=blue, very thick] coordinates { 
(0.51,0.516550392107346)
(0.52,0.533633681401821)
(0.53,0.549743742229298)
(0.54,0.567629841622269)
(0.55,0.584369422382675)
(0.56,0.600524889463733)
(0.57,0.617448235521735)
(0.58,0.634284938047674)
};
\end{axis}
\end{tikzpicture}
\end{subfigure}

\vspace{0.1cm} 

\begin{tikzpicture}
        \begin{customlegend}[legend columns=1,legend entries={Average vote share of party $A$ under DVT (right scale),Average vote share of party $A$ under NVT (right scale)}]
        \addlegendimage{color=red, very thick}
        \addlegendimage{color=blue, very thick}   
        \end{customlegend}
\end{tikzpicture}

\pgfplotsset{
legend image code/.code={%
\draw[#1,fill] (0cm,-0.1cm) rectangle (0.125cm,0.25cm) (0.25cm,-0.1cm) rectangle (0.375cm,0.175cm);
},
}
\vspace{0.1cm}
\begin{tikzpicture}        
        \begin{customlegend}[legend columns=1,legend entries={Number of runs: DVT results in a majority contrary to NVT (left scale),Number of runs: NVT results in a majority contrary to DVT (left scale)}]
        \addlegendimage{color=red, very thick}
        \addlegendimage{color=blue, very thick}   
        \end{customlegend}
\end{tikzpicture}
\end{figure}

One may think that the dominating party has still some incentives to raise the share of constituency seats in order to increase its majority. It is actually true as Figure \ref{Fig2} presents. Here the lines show the average vote share of the party $A$ in the simulation, while bars show the number of runs where DVT vs. NVT result in a majority contrary to the other formula (from the total ten thousand runs). In the terminology of Table \ref{Table1}, the latter means Under DVT and PVT plus Under DVT vs. Under PVT and NTV plus Under NVT.

On the basis of Figure \ref{Fig2}, DVT always result in a larger majority \emph{ex ante}, although NVT provides a win more often \emph{ex post}. Difference between the averages declines by increasing the ratio of constituency seats ($\alpha$) as well as by a rise in variance ($h$). DVT is better than NVT only in a few cases with respect to a majority. NVT clearly overcomes DVT especially if the expected value of party $A$'s vote share is only a bit above half but not too close to it.

It seems there exists a kind of trade-off between DVT and NVT: the first will probably give more seats to the dominating party but it endangers its majority. Therefore a risk-averse party prefers rule NVT to DVT.
All results are robust with respect to the share of list seats, at least if the dominating party's votes are distributed uniformly with an arbitrary variation among the constituencies.

\subsection{Extension to more parties} \label{Sec33}

In this case the vote shares of parties depend on their relative strength in each constituency and cannot be governed by just one parameter as for a two-party system. For example, it can occur that the third party $C$ has no chance to win a constituency, but it may also be a strong regional party which is a clear favorite in some districts.
It makes the derivation of analytical results practically impossible. Simulations are difficult to carry out, too, since there should be introduced at least two independent random variables in each constituency and there can be found a lot of possibilities to determine their correlation. Therefore this topic is left for future research.

\subsection{Electoral strategies under different vote transfer formulas} \label{Sec34}

All mixed electoral systems are subject to strategic behavior \citep{BochslerBernauer2014, Bochsler2015}. We consider three possible ways of manipulation: party-splitting, vote transferring and stronghold-splitting.

Party-splitting refers to a paradox when large parties can gain by splitting their votes on several lists. It is a relevant issue in most mixed-member electoral systems since a party winning a number of constituencies is usually punished in the allocation of compensation seats. It does not emerge in our model because voters have only one vote.

It leads to the second possible strategy, vote transferring. Then parties avoid to win a district mandate and transfer their votes to the proportional part of the system instead.\footnote{~It seems to be difficult not to win a district consciously. However, it can be regarded not as a deliberate strategy but a kind of monotonicity principle: it will be strange if a party gains by getting fewer votes.}
It only pays out if the 'price' of compensatory mandates is smaller than at least one district mandate, which is not probable in the model until the share of list seats is very high (small $\alpha$).\footnote{~An example can be seen in Section \ref{Sec31}, where party $A$ with a gerrymandering power loses under PVT compared to the proportional rule if $\alpha < 1/2 x$.} In fact, its emergence requires that $\alpha < 0.5$ since the number of votes cast in each constituency is equal and the number of list votes cannot be smaller than their sum.

Note also that the \emph{ex post} variation (i.e. the number of votes cast count, not the number of registered voters) of the size of constituencies increases the probability of a successful vote-transfer strategy. However, it remains more difficult under NVT than under PVT since the former results in more list votes. Moreover, DVT entirely excludes the exploitation of this strategy as there exists no possible gain from losing a district mandate.

The third strategy is stronghold-splitting: in a party stronghold, its candidate can win with a large margin over the second-largest party but excess votes does not count under DVT and PVT. So the party can obtain the mandate with fewer votes by presenting two candidates, one of them running on a 'clone list', that is, votes cast on its losing candidate are added to the list votes. The system cannot be manipulated under DVT and NVT in this way.

To conclude, vote transfer electoral systems perform well with respect to strategic manipulation. Split-ticket voting is impossible, stronghold-splitting occurs only in the case of PVT, and DVT is immune still to a vote-transfer strategy. Among the three formulas, DVT is the most robust, while PVT is the most vulnerable from the viewpoint of manipulation.

\section{The model and Hungarian electoral system} \label{Sec4}

Since the new Constitution of 2012 the electoral system of Hungary consists of two tiers. Each elector residing in the country casts two votes, one for an individual candidate and one for a party list. Voters having no domicile in Hungary may vote for a party list.

The candidate vote determines the allocation of $106$ mandates from single-member districts in the first tier. Constituencies have a territorial base and approximately equal size with respect to the number of registered voters. They are won by the candidate who got the most votes.

Further $93$ mandates are allocated among the parties with a national list in proportion to the votes cast for their list and the surplus votes according to the d'Hondt formula.\footnote{~Minorities are able to set a minority list and obtain the first mandate for only a quarter of the votes required for a 'normal' party mandate. One must to register as a member of national minority if wishes to vote for national minority lists. In that case, this voter is not allowed to vote for a party list. In the 2014 elections no national minority obtained a parliamentary seat.}

Surplus votes comprise the votes cast for party candidates in the single-member constituencies who lost and the votes cast for party candidates in the single-member constituencies which have effectively not been needed to obtain seats (the difference in the number of votes between on the most successful and second candidates in a given constituency). There exists a $5$\% threshold in case of a party-list, $10$\% in case of two parties' joint list, and $15$\% in case of a joint lists of three or more parties.\footnote{~A detailed description of the electoral rules can be found at \url{http://www.ipu.org/parline-e/reports/2141_B.htm}.}

It closely follows the model presented in Section~\ref{Sec2}, however, voters have two votes instead of one. Candidate and list votes were not distinguished in order to eliminate vote splitting, however, it is not frequent in Hungary as Table \ref{Table2} illustrates by the data of 2014 parliamentary elections.
Note that only inland list votes should be taken into account since citizens without domicile in the country do not vote in single-member districts. Larger parties\footnote{~In the following, we call party lists registered for the 2014 election to parties, but some of them are essentially an alliance of parties.} get a bit more, smaller parties get a bit fewer votes in constituencies, however, the difference is not significant.

\begin{table}[!ht]
  \centering
  \caption{Aggregated result, 2014 Hungarian parliamentary election}
  \label{Table2}
    \begin{tabularx}{1\textwidth}{l RRR} \toprule
    Party & \multicolumn{1}{C}{Inland list \linebreak votes (L)} & \multicolumn{1}{C}{Candidate votes (C)} & \multicolumn{1}{C}{Difference (1-C/L)} \\ \midrule
    FIDESZ-KDNP & 2 142 142 & 2 165 342 & -1.07\% \\
    MSZP-EGYÜTT-DK-PM-MLP & 1 289 311 & 1 317 879 & -2.17\% \\
    JOBBIK & 1 017 550 & 1 000 637 & 1.69\% \\
    LMP   & 268 840 & 244 191 & 10.09\% \\ \bottomrule
    \end{tabularx}
\end{table}

The share of constituency seats is $\alpha = 106/199 \approx 53.27$\%. The new election law specifies the use of NVT in order to calculate list votes.

\section{Application to the 2014 Hungarian parliamentary election} \label{Sec5}

The last Hungarian parliamentary election took place on 6 April 2014 and so far it is the only election organized according to the new law.

In parliamentary elections between 1990 and 2010 a different system was applied, investigated by \citet{Benoit2001} and \citet{BenoitSchiemann2001}, among others. It was a much more complex system with three tiers. Briefly, there was $386$ representatives from $176$ single-member constituencies and $210$ candidates from party lists (a maximum of $152$ on territorial and a minimum of $58$ on national party lists). It means $\alpha = 176/386 \approx 46$\%. As each territorial list as well as the national list had its own divisor for converting votes into seats, this system suffered from the population paradox, namely, some parties might lose by getting more votes or by the opposition obtaining fewer votes \citep{Tasnadi2008}.

The electoral reform of 2012 has brought substantial changes including a redrawing of single-member constituencies and an essential simplification of the proportional tier of the system. A seemingly minor change was a modification in the calculation of surplus votes: in the elections between 1990 and 2010 only the votes in constituencies cast for party candidates who lost were transferred into list votes, corresponding to our formula PVT.

In the following hypothetical results will be derived on the basis of 2014 results. The focus is on two parameters changed by the electoral reform: the share of constituency seats ($\alpha$) and the formula applied for calculating surplus votes.

Party FIDESZ-KDNP has won $96$ constituencies, while MSZP-EGYÜTT-DK-PM-MLP has obtained the remaining $10$.
Four parties have got more votes than the appropriate threshold for list votes, they are FIDESZ-KDNP, MSZP-EGYÜTT-DK-PM-MLP, JOBBIK and LMP.
Number of votes for these party lists are given in Table \ref{Table3} according to the three vote transfer formulas. The last column corresponds to the official result.

DVT does not contain any corrections. PVT favors JOBBIK and LMP, the parties which have not won any constituency. They almost double the number of effective votes (small differences are due to vote splitting). Contrarily, PVT is not a favorable procedure for FIDESZ-KDNP since it has few losing candidates. NVT is equivalent to PVT in the case of JOBBIK and LMP, adds some votes to MSZP-EGYÜTT-DK-PM-MLP but FIDESZ-KDNP benefits most from its use.

\begin{table}[!ht]
  \centering
  \caption{Number of list votes, 2014 Hungarian parliamentary election}
  \label{Table3}
    \begin{tabularx}{\textwidth}{l RRR} \toprule
    Party & \multicolumn{1}{C}{DVT} & \multicolumn{1}{C}{PVT} & \multicolumn{1}{C}{NVT} \\ \midrule
    FIDESZ-KDNP & 2 264 780 & 2 440 963 & 3 205 661 \\
    MSZP-EGYÜTT-DK-PM-MLP & 1 290 806 & 2 410 128 & 2 432 492 \\
    JOBBIK & 1 020 476 & 2 021 113 & 2 021 113 \\
    LMP   & 269 414 & 513 605 & 513 605 \\ \bottomrule
    \end{tabularx} 
\end{table}

Parameter $\alpha$ can be varied even if the results in the constituencies and the votes cast is fixed as given in Table \ref{Table3}. We have chosen the number of list seats to be between $50$ and $150$, namely, $67.95\% \approx 106/156 \geq \alpha \geq 106/256 \approx 41.41\%$. The number of mandates for each party are presented on Figure \ref{Fig3}.

\begin{figure}[htbp]
\centering
\caption{Mandate share of parties, 2014 Hungarian parliamentary election}
\label{Fig3}

\begin{subfigure}{\textwidth}
	\centering
	\caption{FIDESZ-KDNP}
	\label{Fig3a}
	
\begin{tikzpicture}
\begin{axis}[width=0.98\textwidth, 
height=0.65\textwidth,
xmin=47.5,
xmax=152.5,
xlabel = Number of list mandates,
ylabel = Mandate share,
ymajorgrids,
legend entries={DVT$\quad$,PVT$\quad$,NVT},
legend style={at={(0.5,-0.15)},anchor = north,legend columns = 3},
scaled ticks=false,
yticklabel=\pgfmathparse{100*\tick}\pgfmathprintnumber{\pgfmathresult}\,\%,
yticklabel style={/pgf/number format/.cd,fixed,precision=2},
]

\addplot[black,dotted,very thick] coordinates {
(50,0.769230769230769)
(51,0.764331210191083)
(52,0.765822784810127)
(53,0.761006289308176)
(54,0.7625)
(55,0.757763975155279)
(56,0.753086419753086)
(57,0.754601226993865)
(58,0.75609756097561)
(59,0.751515151515151)
(60,0.746987951807229)
(61,0.748502994011976)
(62,0.744047619047619)
(63,0.745562130177515)
(64,0.747058823529412)
(65,0.742690058479532)
(66,0.738372093023256)
(67,0.739884393063584)
(68,0.741379310344828)
(69,0.737142857142857)
(70,0.732954545454545)
(71,0.728813559322034)
(72,0.730337078651685)
(73,0.731843575418994)
(74,0.727777777777778)
(75,0.723756906077348)
(76,0.725274725274725)
(77,0.721311475409836)
(78,0.722826086956522)
(79,0.718918918918919)
(80,0.720430107526882)
(81,0.716577540106952)
(82,0.718085106382979)
(83,0.714285714285714)
(84,0.715789473684211)
(85,0.712041884816754)
(86,0.713541666666667)
(87,0.715025906735751)
(88,0.711340206185567)
(89,0.707692307692308)
(90,0.704081632653061)
(91,0.705583756345178)
(92,0.702020202020202)
(93,0.703517587939699)
(94,0.7)
(95,0.701492537313433)
(96,0.698019801980198)
(97,0.699507389162562)
(98,0.696078431372549)
(99,0.697560975609756)
(100,0.694174757281553)
(101,0.695652173913043)
(102,0.692307692307692)
(103,0.69377990430622)
(104,0.69047619047619)
(105,0.691943127962085)
(106,0.688679245283019)
(107,0.685446009389671)
(108,0.686915887850467)
(109,0.683720930232558)
(110,0.685185185185185)
(111,0.682027649769585)
(112,0.68348623853211)
(113,0.680365296803653)
(114,0.681818181818182)
(115,0.678733031674208)
(116,0.68018018018018)
(117,0.677130044843049)
(118,0.678571428571429)
(119,0.675555555555556)
(120,0.676991150442478)
(121,0.674008810572687)
(122,0.671052631578947)
(123,0.672489082969432)
(124,0.669565217391304)
(125,0.670995670995671)
(126,0.668103448275862)
(127,0.665236051502146)
(128,0.666666666666667)
(129,0.668085106382979)
(130,0.665254237288136)
(131,0.666666666666667)
(132,0.663865546218487)
(133,0.665271966527197)
(134,0.6625)
(135,0.663900414937759)
(136,0.661157024793388)
(137,0.65843621399177)
(138,0.659836065573771)
(139,0.661224489795918)
(140,0.658536585365854)
(141,0.65587044534413)
(142,0.657258064516129)
(143,0.654618473895582)
(144,0.656)
(145,0.653386454183267)
(146,0.650793650793651)
(147,0.652173913043478)
(148,0.653543307086614)
(149,0.650980392156863)
(150,0.65234375)
};

\addplot[red,very thick] coordinates {
(50,0.724358974358974)
(51,0.719745222929936)
(52,0.721518987341772)
(53,0.716981132075472)
(54,0.7125)
(55,0.714285714285714)
(56,0.709876543209877)
(57,0.705521472392638)
(58,0.701219512195122)
(59,0.703030303030303)
(60,0.698795180722892)
(61,0.694610778443114)
(62,0.696428571428571)
(63,0.692307692307692)
(64,0.688235294117647)
(65,0.690058479532164)
(66,0.686046511627907)
(67,0.682080924855491)
(68,0.683908045977011)
(69,0.68)
(70,0.676136363636364)
(71,0.677966101694915)
(72,0.674157303370787)
(73,0.670391061452514)
(74,0.672222222222222)
(75,0.668508287292818)
(76,0.664835164835165)
(77,0.666666666666667)
(78,0.66304347826087)
(79,0.65945945945946)
(80,0.661290322580645)
(81,0.657754010695187)
(82,0.654255319148936)
(83,0.656084656084656)
(84,0.652631578947368)
(85,0.649214659685864)
(86,0.645833333333333)
(87,0.647668393782383)
(88,0.644329896907217)
(89,0.646153846153846)
(90,0.642857142857143)
(91,0.639593908629442)
(92,0.641414141414141)
(93,0.638190954773869)
(94,0.635)
(95,0.63681592039801)
(96,0.633663366336634)
(97,0.630541871921182)
(98,0.632352941176471)
(99,0.629268292682927)
(100,0.62621359223301)
(101,0.623188405797101)
(102,0.625)
(103,0.62200956937799)
(104,0.623809523809524)
(105,0.62085308056872)
(106,0.617924528301887)
(107,0.619718309859155)
(108,0.616822429906542)
(109,0.613953488372093)
(110,0.615740740740741)
(111,0.612903225806452)
(112,0.610091743119266)
(113,0.611872146118721)
(114,0.609090909090909)
(115,0.606334841628959)
(116,0.603603603603604)
(117,0.605381165919282)
(118,0.602678571428571)
(119,0.6)
(120,0.601769911504425)
(121,0.599118942731278)
(122,0.600877192982456)
(123,0.59825327510917)
(124,0.595652173913043)
(125,0.597402597402597)
(126,0.594827586206897)
(127,0.592274678111588)
(128,0.58974358974359)
(129,0.591489361702128)
(130,0.588983050847458)
(131,0.586497890295359)
(132,0.588235294117647)
(133,0.585774058577406)
(134,0.583333333333333)
(135,0.5850622406639)
(136,0.582644628099174)
(137,0.580246913580247)
(138,0.581967213114754)
(139,0.579591836734694)
(140,0.58130081300813)
(141,0.578947368421053)
(142,0.576612903225806)
(143,0.57429718875502)
(144,0.576)
(145,0.573705179282869)
(146,0.571428571428571)
(147,0.573122529644269)
(148,0.570866141732283)
(149,0.568627450980392)
(150,0.5703125)
};

\addplot[blue,dashed,very thick] coordinates {
(50,0.743589743589744)
(51,0.738853503184713)
(52,0.740506329113924)
(53,0.735849056603774)
(54,0.7375)
(55,0.732919254658385)
(56,0.728395061728395)
(57,0.730061349693252)
(58,0.725609756097561)
(59,0.721212121212121)
(60,0.72289156626506)
(61,0.718562874251497)
(62,0.720238095238095)
(63,0.715976331360947)
(64,0.711764705882353)
(65,0.713450292397661)
(66,0.709302325581395)
(67,0.705202312138728)
(68,0.706896551724138)
(69,0.702857142857143)
(70,0.704545454545455)
(71,0.700564971751412)
(72,0.696629213483146)
(73,0.698324022346369)
(74,0.7)
(75,0.696132596685083)
(76,0.692307692307692)
(77,0.693989071038251)
(78,0.690217391304348)
(79,0.686486486486486)
(80,0.682795698924731)
(81,0.684491978609626)
(82,0.680851063829787)
(83,0.682539682539683)
(84,0.678947368421053)
(85,0.680628272251309)
(86,0.677083333333333)
(87,0.673575129533679)
(88,0.675257731958763)
(89,0.671794871794872)
(90,0.673469387755102)
(91,0.67005076142132)
(92,0.666666666666667)
(93,0.668341708542714)
(94,0.665)
(95,0.666666666666667)
(96,0.663366336633663)
(97,0.660098522167488)
(98,0.661764705882353)
(99,0.658536585365854)
(100,0.655339805825243)
(101,0.657004830917874)
(102,0.653846153846154)
(103,0.655502392344498)
(104,0.652380952380952)
(105,0.654028436018957)
(106,0.650943396226415)
(107,0.647887323943662)
(108,0.649532710280374)
(109,0.646511627906977)
(110,0.643518518518518)
(111,0.645161290322581)
(112,0.642201834862385)
(113,0.639269406392694)
(114,0.640909090909091)
(115,0.638009049773756)
(116,0.63963963963964)
(117,0.63677130044843)
(118,0.638392857142857)
(119,0.635555555555556)
(120,0.632743362831858)
(121,0.634361233480176)
(122,0.631578947368421)
(123,0.633187772925764)
(124,0.630434782608696)
(125,0.627705627705628)
(126,0.629310344827586)
(127,0.626609442060086)
(128,0.623931623931624)
(129,0.625531914893617)
(130,0.622881355932203)
(131,0.624472573839662)
(132,0.621848739495798)
(133,0.619246861924686)
(134,0.620833333333333)
(135,0.618257261410788)
(136,0.619834710743802)
(137,0.617283950617284)
(138,0.618852459016393)
(139,0.616326530612245)
(140,0.613821138211382)
(141,0.615384615384615)
(142,0.612903225806452)
(143,0.610441767068273)
(144,0.612)
(145,0.609561752988048)
(146,0.607142857142857)
(147,0.608695652173913)
(148,0.606299212598425)
(149,0.607843137254902)
(150,0.60546875)
};

\addplot[brown,dashdotted,very thick] coordinates {
(47.5,2/3)
(152.5,2/3)
};
\draw[green,ultra thick] ({axis cs:93,0}|-{rel axis cs:0,0}) -- ({axis cs:93,0}|-{rel axis cs:0,1});
\end{axis}
\end{tikzpicture}
\end{subfigure}
\vspace{0.5cm}

\begin{subfigure}{\textwidth}
	\centering
	\caption{MSZP-EGYÜTT-DK-PM-MLP}
	\label{Fig3b}
	
\begin{tikzpicture}
\begin{axis}[width=0.98\textwidth, 
height=0.65\textwidth,
xmin=47.5,
xmax=152.5,
xlabel = Number of list mandates,
ylabel = Mandate share,
ymajorgrids,
legend entries={DVT$\quad$,PVT$\quad$,NVT},
legend style={at={(0.5,-0.15)},anchor = north,legend columns = 3},
scaled ticks=false,
yticklabel=\pgfmathparse{100*\tick}\pgfmathprintnumber{\pgfmathresult}\,\%,
yticklabel style={/pgf/number format/.cd,fixed,precision=2}
]

\addplot[black,dotted,very thick] coordinates {
(50,0.147435897435897)
(51,0.152866242038217)
(52,0.151898734177215)
(53,0.150943396226415)
(54,0.15)
(55,0.15527950310559)
(56,0.154320987654321)
(57,0.153374233128834)
(58,0.152439024390244)
(59,0.157575757575758)
(60,0.156626506024096)
(61,0.155688622754491)
(62,0.160714285714286)
(63,0.159763313609467)
(64,0.158823529411765)
(65,0.157894736842105)
(66,0.162790697674419)
(67,0.161849710982659)
(68,0.160919540229885)
(69,0.16)
(70,0.164772727272727)
(71,0.163841807909605)
(72,0.162921348314607)
(73,0.162011173184358)
(74,0.166666666666667)
(75,0.165745856353591)
(76,0.164835164835165)
(77,0.169398907103825)
(78,0.168478260869565)
(79,0.167567567567568)
(80,0.166666666666667)
(81,0.171122994652406)
(82,0.170212765957447)
(83,0.169312169312169)
(84,0.168421052631579)
(85,0.172774869109948)
(86,0.171875)
(87,0.170984455958549)
(88,0.170103092783505)
(89,0.174358974358974)
(90,0.173469387755102)
(91,0.17258883248731)
(92,0.176767676767677)
(93,0.175879396984925)
(94,0.175)
(95,0.174129353233831)
(96,0.178217821782178)
(97,0.177339901477833)
(98,0.176470588235294)
(99,0.175609756097561)
(100,0.179611650485437)
(101,0.178743961352657)
(102,0.177884615384615)
(103,0.177033492822967)
(104,0.180952380952381)
(105,0.180094786729858)
(106,0.179245283018868)
(107,0.183098591549296)
(108,0.182242990654206)
(109,0.181395348837209)
(110,0.180555555555556)
(111,0.184331797235023)
(112,0.18348623853211)
(113,0.182648401826484)
(114,0.181818181818182)
(115,0.18552036199095)
(116,0.184684684684685)
(117,0.183856502242152)
(118,0.183035714285714)
(119,0.186666666666667)
(120,0.185840707964602)
(121,0.185022026431718)
(122,0.18859649122807)
(123,0.187772925764192)
(124,0.18695652173913)
(125,0.186147186147186)
(126,0.189655172413793)
(127,0.188841201716738)
(128,0.188034188034188)
(129,0.187234042553191)
(130,0.190677966101695)
(131,0.189873417721519)
(132,0.189075630252101)
(133,0.188284518828452)
(134,0.191666666666667)
(135,0.190871369294606)
(136,0.190082644628099)
(137,0.193415637860082)
(138,0.192622950819672)
(139,0.191836734693878)
(140,0.191056910569106)
(141,0.194331983805668)
(142,0.193548387096774)
(143,0.192771084337349)
(144,0.192)
(145,0.195219123505976)
(146,0.194444444444444)
(147,0.193675889328063)
(148,0.192913385826772)
(149,0.196078431372549)
(150,0.1953125)
};

\addplot[red,very thick] coordinates {
(50,0.166666666666667)
(51,0.171974522292994)
(52,0.170886075949367)
(53,0.169811320754717)
(54,0.175)
(55,0.173913043478261)
(56,0.172839506172839)
(57,0.177914110429448)
(58,0.176829268292683)
(59,0.175757575757576)
(60,0.180722891566265)
(61,0.179640718562874)
(62,0.178571428571429)
(63,0.183431952662722)
(64,0.182352941176471)
(65,0.181286549707602)
(66,0.186046511627907)
(67,0.184971098265896)
(68,0.183908045977011)
(69,0.188571428571429)
(70,0.1875)
(71,0.186440677966102)
(72,0.185393258426966)
(73,0.189944134078212)
(74,0.188888888888889)
(75,0.193370165745856)
(76,0.192307692307692)
(77,0.191256830601093)
(78,0.195652173913043)
(79,0.194594594594595)
(80,0.193548387096774)
(81,0.197860962566845)
(82,0.196808510638298)
(83,0.195767195767196)
(84,0.2)
(85,0.198952879581152)
(86,0.197916666666667)
(87,0.196891191709845)
(88,0.201030927835052)
(89,0.2)
(90,0.198979591836735)
(91,0.203045685279188)
(92,0.202020202020202)
(93,0.206030150753769)
(94,0.205)
(95,0.203980099502488)
(96,0.207920792079208)
(97,0.206896551724138)
(98,0.205882352941176)
(99,0.204878048780488)
(100,0.20873786407767)
(101,0.207729468599034)
(102,0.206730769230769)
(103,0.210526315789474)
(104,0.20952380952381)
(105,0.208530805687204)
(106,0.212264150943396)
(107,0.211267605633803)
(108,0.210280373831776)
(109,0.213953488372093)
(110,0.212962962962963)
(111,0.211981566820276)
(112,0.215596330275229)
(113,0.214611872146119)
(114,0.213636363636364)
(115,0.217194570135747)
(116,0.216216216216216)
(117,0.2152466367713)
(118,0.21875)
(119,0.217777777777778)
(120,0.216814159292035)
(121,0.220264317180617)
(122,0.219298245614035)
(123,0.218340611353712)
(124,0.221739130434783)
(125,0.220779220779221)
(126,0.219827586206897)
(127,0.223175965665236)
(128,0.222222222222222)
(129,0.221276595744681)
(130,0.220338983050847)
(131,0.223628691983122)
(132,0.222689075630252)
(133,0.225941422594142)
(134,0.225)
(135,0.224066390041494)
(136,0.227272727272727)
(137,0.226337448559671)
(138,0.225409836065574)
(139,0.228571428571429)
(140,0.227642276422764)
(141,0.226720647773279)
(142,0.225806451612903)
(143,0.228915662650602)
(144,0.228)
(145,0.227091633466135)
(146,0.23015873015873)
(147,0.229249011857708)
(148,0.228346456692913)
(149,0.231372549019608)
(150,0.23046875)
};

\addplot[blue,dashed,very thick] coordinates {
(50,0.16025641025641)
(51,0.159235668789809)
(52,0.158227848101266)
(53,0.163522012578616)
(54,0.1625)
(55,0.161490683229814)
(56,0.166666666666667)
(57,0.165644171779141)
(58,0.170731707317073)
(59,0.16969696969697)
(60,0.168674698795181)
(61,0.167664670658683)
(62,0.166666666666667)
(63,0.171597633136095)
(64,0.170588235294118)
(65,0.169590643274854)
(66,0.174418604651163)
(67,0.173410404624277)
(68,0.172413793103448)
(69,0.177142857142857)
(70,0.176136363636364)
(71,0.175141242937853)
(72,0.179775280898876)
(73,0.17877094972067)
(74,0.177777777777778)
(75,0.176795580110497)
(76,0.181318681318681)
(77,0.180327868852459)
(78,0.179347826086957)
(79,0.183783783783784)
(80,0.182795698924731)
(81,0.181818181818182)
(82,0.186170212765957)
(83,0.185185185185185)
(84,0.184210526315789)
(85,0.183246073298429)
(86,0.1875)
(87,0.186528497409326)
(88,0.185567010309278)
(89,0.18974358974359)
(90,0.188775510204082)
(91,0.187817258883249)
(92,0.191919191919192)
(93,0.190954773869347)
(94,0.19)
(95,0.189054726368159)
(96,0.188118811881188)
(97,0.192118226600985)
(98,0.191176470588235)
(99,0.195121951219512)
(100,0.194174757281553)
(101,0.193236714975845)
(102,0.197115384615385)
(103,0.196172248803828)
(104,0.195238095238095)
(105,0.194312796208531)
(106,0.19811320754717)
(107,0.197183098591549)
(108,0.196261682242991)
(109,0.2)
(110,0.199074074074074)
(111,0.19815668202765)
(112,0.197247706422018)
(113,0.200913242009132)
(114,0.2)
(115,0.199095022624434)
(116,0.198198198198198)
(117,0.201793721973094)
(118,0.200892857142857)
(119,0.204444444444444)
(120,0.20353982300885)
(121,0.202643171806167)
(122,0.206140350877193)
(123,0.205240174672489)
(124,0.204347826086957)
(125,0.203463203463203)
(126,0.202586206896552)
(127,0.206008583690987)
(128,0.205128205128205)
(129,0.204255319148936)
(130,0.207627118644068)
(131,0.206751054852321)
(132,0.205882352941176)
(133,0.209205020920502)
(134,0.208333333333333)
(135,0.20746887966805)
(136,0.206611570247934)
(137,0.209876543209877)
(138,0.209016393442623)
(139,0.212244897959184)
(140,0.211382113821138)
(141,0.210526315789474)
(142,0.209677419354839)
(143,0.21285140562249)
(144,0.212)
(145,0.211155378486056)
(146,0.214285714285714)
(147,0.213438735177866)
(148,0.21259842519685)
(149,0.211764705882353)
(150,0.21484375)
};
\draw[green,ultra thick] ({axis cs:93,0}|-{rel axis cs:0,0}) -- ({axis cs:93,0}|-{rel axis cs:0,1});
\end{axis}
\end{tikzpicture}
\end{subfigure}
\end{figure}

\begin{figure}
\ContinuedFloat
\begin{subfigure}{\textwidth}
	\centering
	\caption{JOBBIK}
	\label{Fig3c}
	
\begin{tikzpicture}
\begin{axis}[width=0.98\textwidth, 
height=0.65\textwidth,
xmin=47.5,
xmax=152.5,
xlabel = Number of list mandates,
ylabel = Mandate share,
ymajorgrids,
legend entries={DVT$\quad$,PVT$\quad$,NVT},
legend style={at={(0.5,-0.15)},anchor = north,legend columns = 3},
scaled ticks=false,
yticklabel=\pgfmathparse{100*\tick}\pgfmathprintnumber{\pgfmathresult}\,\%,
yticklabel style={/pgf/number format/.cd,fixed,precision=2}
]

\addplot[black,dotted,very thick] coordinates {
(50,7.05128205128205E-02)
(51,7.00636942675159E-02)
(52,0.069620253164557)
(53,6.91823899371069E-02)
(54,0.06875)
(55,6.83229813664596E-02)
(56,7.40740740740741E-02)
(57,7.36196319018405E-02)
(58,7.31707317073171E-02)
(59,7.27272727272727E-02)
(60,7.83132530120482E-02)
(61,7.78443113772455E-02)
(62,7.73809523809524E-02)
(63,7.69230769230769E-02)
(64,7.64705882352941E-02)
(65,8.18713450292398E-02)
(66,8.13953488372093E-02)
(67,8.09248554913295E-02)
(68,8.04597701149425E-02)
(69,8.57142857142857E-02)
(70,8.52272727272727E-02)
(71,8.47457627118644E-02)
(72,8.42696629213483E-02)
(73,8.37988826815642E-02)
(74,8.33333333333333E-02)
(75,8.83977900552486E-02)
(76,8.79120879120879E-02)
(77,0.087431693989071)
(78,8.69565217391304E-02)
(79,9.18918918918919E-02)
(80,9.13978494623656E-02)
(81,9.09090909090909E-02)
(82,9.04255319148936E-02)
(83,9.52380952380952E-02)
(84,9.47368421052632E-02)
(85,9.42408376963351E-02)
(86,0.09375)
(87,9.32642487046632E-02)
(88,9.27835051546392E-02)
(89,9.23076923076923E-02)
(90,9.69387755102041E-02)
(91,9.64467005076142E-02)
(92,0.095959595959596)
(93,9.54773869346734E-02)
(94,0.1)
(95,9.95024875621891E-02)
(96,0.099009900990099)
(97,9.85221674876847E-02)
(98,0.102941176470588)
(99,0.102439024390244)
(100,0.101941747572816)
(101,0.101449275362319)
(102,0.105769230769231)
(103,0.105263157894737)
(104,0.104761904761905)
(105,0.104265402843602)
(106,0.10377358490566)
(107,0.103286384976526)
(108,0.102803738317757)
(109,0.106976744186047)
(110,0.106481481481481)
(111,0.105990783410138)
(112,0.105504587155963)
(113,0.10958904109589)
(114,0.109090909090909)
(115,0.108597285067873)
(116,0.108108108108108)
(117,0.112107623318386)
(118,0.111607142857143)
(119,0.111111111111111)
(120,0.110619469026549)
(121,0.114537444933921)
(122,0.114035087719298)
(123,0.11353711790393)
(124,0.11304347826087)
(125,0.112554112554113)
(126,0.112068965517241)
(127,0.11587982832618)
(128,0.115384615384615)
(129,0.114893617021277)
(130,0.114406779661017)
(131,0.113924050632911)
(132,0.117647058823529)
(133,0.117154811715481)
(134,0.116666666666667)
(135,0.116182572614108)
(136,0.119834710743802)
(137,0.119341563786008)
(138,0.118852459016393)
(139,0.118367346938776)
(140,0.121951219512195)
(141,0.121457489878543)
(142,0.120967741935484)
(143,0.120481927710843)
(144,0.12)
(145,0.119521912350598)
(146,0.123015873015873)
(147,0.122529644268775)
(148,0.122047244094488)
(149,0.12156862745098)
(150,0.12109375)
};

\addplot[red,very thick] coordinates {
(50,8.97435897435897E-02)
(51,0.089171974522293)
(52,8.86075949367089E-02)
(53,9.43396226415094E-02)
(54,0.09375)
(55,0.093167701863354)
(56,9.25925925925926E-02)
(57,9.20245398773006E-02)
(58,9.75609756097561E-02)
(59,0.096969696969697)
(60,9.63855421686747E-02)
(61,0.101796407185629)
(62,0.101190476190476)
(63,0.100591715976331)
(64,0.105882352941176)
(65,0.105263157894737)
(66,0.104651162790698)
(67,0.109826589595376)
(68,0.109195402298851)
(69,0.108571428571429)
(70,0.107954545454545)
(71,0.107344632768362)
(72,0.112359550561798)
(73,0.111731843575419)
(74,0.111111111111111)
(75,0.110497237569061)
(76,0.115384615384615)
(77,0.114754098360656)
(78,0.114130434782609)
(79,0.118918918918919)
(80,0.118279569892473)
(81,0.117647058823529)
(82,0.122340425531915)
(83,0.121693121693122)
(84,0.121052631578947)
(85,0.120418848167539)
(86,0.125)
(87,0.124352331606218)
(88,0.123711340206186)
(89,0.123076923076923)
(90,0.127551020408163)
(91,0.126903553299492)
(92,0.126262626262626)
(93,0.125628140703518)
(94,0.13)
(95,0.129353233830846)
(96,0.128712871287129)
(97,0.133004926108374)
(98,0.132352941176471)
(99,0.131707317073171)
(100,0.131067961165049)
(101,0.135265700483092)
(102,0.134615384615385)
(103,0.133971291866029)
(104,0.133333333333333)
(105,0.137440758293839)
(106,0.136792452830189)
(107,0.136150234741784)
(108,0.14018691588785)
(109,0.13953488372093)
(110,0.138888888888889)
(111,0.142857142857143)
(112,0.142201834862385)
(113,0.141552511415525)
(114,0.140909090909091)
(115,0.14027149321267)
(116,0.144144144144144)
(117,0.143497757847534)
(118,0.142857142857143)
(119,0.146666666666667)
(120,0.146017699115044)
(121,0.145374449339207)
(122,0.144736842105263)
(123,0.148471615720524)
(124,0.147826086956522)
(125,0.147186147186147)
(126,0.150862068965517)
(127,0.150214592274678)
(128,0.14957264957265)
(129,0.148936170212766)
(130,0.152542372881356)
(131,0.151898734177215)
(132,0.151260504201681)
(133,0.150627615062762)
(134,0.154166666666667)
(135,0.153526970954357)
(136,0.152892561983471)
(137,0.156378600823045)
(138,0.155737704918033)
(139,0.155102040816327)
(140,0.154471544715447)
(141,0.157894736842105)
(142,0.157258064516129)
(143,0.156626506024096)
(144,0.156)
(145,0.159362549800797)
(146,0.158730158730159)
(147,0.158102766798419)
(148,0.161417322834646)
(149,0.16078431372549)
(150,0.16015625)
};

\addplot[blue,dashed,very thick] coordinates {
(50,7.69230769230769E-02)
(51,8.28025477707006E-02)
(52,8.22784810126582E-02)
(53,8.17610062893082E-02)
(54,0.08125)
(55,8.69565217391304E-02)
(56,8.64197530864197E-02)
(57,8.58895705521472E-02)
(58,8.53658536585366E-02)
(59,9.09090909090909E-02)
(60,9.03614457831325E-02)
(61,8.98203592814371E-02)
(62,8.92857142857143E-02)
(63,8.87573964497041E-02)
(64,9.41176470588235E-02)
(65,9.35672514619883E-02)
(66,9.30232558139535E-02)
(67,9.82658959537572E-02)
(68,9.77011494252874E-02)
(69,9.71428571428571E-02)
(70,9.65909090909091E-02)
(71,0.101694915254237)
(72,0.101123595505618)
(73,0.100558659217877)
(74,0.1)
(75,0.104972375690608)
(76,0.104395604395604)
(77,0.103825136612022)
(78,0.103260869565217)
(79,0.102702702702703)
(80,0.10752688172043)
(81,0.106951871657754)
(82,0.106382978723404)
(83,0.105820105820106)
(84,0.110526315789474)
(85,0.109947643979058)
(86,0.109375)
(87,0.113989637305699)
(88,0.11340206185567)
(89,0.112820512820513)
(90,0.112244897959184)
(91,0.116751269035533)
(92,0.116161616161616)
(93,0.115577889447236)
(94,0.115)
(95,0.114427860696517)
(96,0.118811881188119)
(97,0.118226600985222)
(98,0.117647058823529)
(99,0.117073170731707)
(100,0.121359223300971)
(101,0.120772946859903)
(102,0.120192307692308)
(103,0.119617224880383)
(104,0.123809523809524)
(105,0.123222748815166)
(106,0.122641509433962)
(107,0.126760563380282)
(108,0.126168224299065)
(109,0.125581395348837)
(110,0.125)
(111,0.124423963133641)
(112,0.128440366972477)
(113,0.127853881278539)
(114,0.127272727272727)
(115,0.131221719457014)
(116,0.130630630630631)
(117,0.130044843049327)
(118,0.129464285714286)
(119,0.128888888888889)
(120,0.132743362831858)
(121,0.13215859030837)
(122,0.131578947368421)
(123,0.131004366812227)
(124,0.134782608695652)
(125,0.134199134199134)
(126,0.133620689655172)
(127,0.133047210300429)
(128,0.136752136752137)
(129,0.136170212765957)
(130,0.135593220338983)
(131,0.135021097046413)
(132,0.138655462184874)
(133,0.138075313807531)
(134,0.1375)
(135,0.141078838174274)
(136,0.140495867768595)
(137,0.139917695473251)
(138,0.139344262295082)
(139,0.138775510204082)
(140,0.142276422764228)
(141,0.1417004048583)
(142,0.141129032258065)
(143,0.140562248995984)
(144,0.14)
(145,0.143426294820717)
(146,0.142857142857143)
(147,0.142292490118577)
(148,0.145669291338583)
(149,0.145098039215686)
(150,0.14453125)
};
\draw[green,ultra thick] ({axis cs:93,0}|-{rel axis cs:0,0}) -- ({axis cs:93,0}|-{rel axis cs:0,1});
\end{axis}
\end{tikzpicture}
\end{subfigure}
\vspace{0.5cm}

\begin{subfigure}{\textwidth}
	\centering
	\caption{LMP}
	\label{Fig3d}
	
\begin{tikzpicture}
\begin{axis}[width=0.98\textwidth, 
height=0.65\textwidth,
xmin=47.5,
xmax=152.5,
xlabel = Number of list mandates,
ylabel = Mandate share,
ymajorgrids,
legend entries={DVT$\quad$,PVT$\quad$,NVT},
legend style={at={(0.5,-0.15)},anchor = north,legend columns = 3},
scaled ticks=false,
yticklabel=\pgfmathparse{100*\tick}\pgfmathprintnumber{\pgfmathresult}\,\%,
yticklabel style={/pgf/number format/.cd,fixed,precision=2}
]

\addplot[black,dotted,very thick] coordinates {
(50,1.28205128205128E-02)
(51,1.27388535031847E-02)
(52,1.26582278481013E-02)
(53,1.88679245283019E-02)
(54,0.01875)
(55,1.86335403726708E-02)
(56,1.85185185185185E-02)
(57,1.84049079754601E-02)
(58,1.82926829268293E-02)
(59,1.81818181818182E-02)
(60,1.80722891566265E-02)
(61,1.79640718562874E-02)
(62,1.78571428571429E-02)
(63,1.77514792899408E-02)
(64,1.76470588235294E-02)
(65,1.75438596491228E-02)
(66,1.74418604651163E-02)
(67,1.73410404624277E-02)
(68,1.72413793103448E-02)
(69,1.71428571428571E-02)
(70,1.70454545454545E-02)
(71,2.25988700564972E-02)
(72,2.24719101123596E-02)
(73,2.23463687150838E-02)
(74,2.22222222222222E-02)
(75,2.20994475138122E-02)
(76,0.021978021978022)
(77,2.18579234972678E-02)
(78,2.17391304347826E-02)
(79,2.16216216216216E-02)
(80,0.021505376344086)
(81,2.13903743315508E-02)
(82,2.12765957446809E-02)
(83,2.11640211640212E-02)
(84,2.10526315789474E-02)
(85,2.09424083769634E-02)
(86,2.08333333333333E-02)
(87,2.07253886010363E-02)
(88,2.57731958762887E-02)
(89,2.56410256410256E-02)
(90,2.55102040816327E-02)
(91,2.53807106598985E-02)
(92,2.52525252525253E-02)
(93,2.51256281407035E-02)
(94,0.025)
(95,2.48756218905473E-02)
(96,2.47524752475248E-02)
(97,2.46305418719212E-02)
(98,2.45098039215686E-02)
(99,0.024390243902439)
(100,2.42718446601942E-02)
(101,2.41545893719807E-02)
(102,2.40384615384615E-02)
(103,2.39234449760766E-02)
(104,2.38095238095238E-02)
(105,0.023696682464455)
(106,2.83018867924528E-02)
(107,0.028169014084507)
(108,2.80373831775701E-02)
(109,0.027906976744186)
(110,2.77777777777778E-02)
(111,2.76497695852535E-02)
(112,2.75229357798165E-02)
(113,2.73972602739726E-02)
(114,2.72727272727273E-02)
(115,2.71493212669683E-02)
(116,0.027027027027027)
(117,2.69058295964126E-02)
(118,2.67857142857143E-02)
(119,2.66666666666667E-02)
(120,2.65486725663717E-02)
(121,0.026431718061674)
(122,2.63157894736842E-02)
(123,2.62008733624454E-02)
(124,3.04347826086957E-02)
(125,3.03030303030303E-02)
(126,3.01724137931035E-02)
(127,3.00429184549356E-02)
(128,2.99145299145299E-02)
(129,2.97872340425532E-02)
(130,2.96610169491525E-02)
(131,0.029535864978903)
(132,2.94117647058824E-02)
(133,2.92887029288703E-02)
(134,2.91666666666667E-02)
(135,0.029045643153527)
(136,2.89256198347107E-02)
(137,2.88065843621399E-02)
(138,2.86885245901639E-02)
(139,2.85714285714286E-02)
(140,2.84552845528455E-02)
(141,2.83400809716599E-02)
(142,2.82258064516129E-02)
(143,3.21285140562249E-02)
(144,0.032)
(145,3.18725099601594E-02)
(146,3.17460317460317E-02)
(147,3.16205533596838E-02)
(148,0.031496062992126)
(149,3.13725490196078E-02)
(150,0.03125)
};

\addplot[red,very thick] coordinates {
(50,1.92307692307692E-02)
(51,1.91082802547771E-02)
(52,1.89873417721519E-02)
(53,1.88679245283019E-02)
(54,0.01875)
(55,1.86335403726708E-02)
(56,2.46913580246914E-02)
(57,2.45398773006135E-02)
(58,0.024390243902439)
(59,2.42424242424242E-02)
(60,2.40963855421687E-02)
(61,2.39520958083832E-02)
(62,2.38095238095238E-02)
(63,2.36686390532544E-02)
(64,2.35294117647059E-02)
(65,2.33918128654971E-02)
(66,2.32558139534884E-02)
(67,0.023121387283237)
(68,2.29885057471264E-02)
(69,2.28571428571429E-02)
(70,2.84090909090909E-02)
(71,2.82485875706215E-02)
(72,2.80898876404494E-02)
(73,2.79329608938547E-02)
(74,2.77777777777778E-02)
(75,2.76243093922652E-02)
(76,2.74725274725275E-02)
(77,2.73224043715847E-02)
(78,2.71739130434783E-02)
(79,0.027027027027027)
(80,2.68817204301075E-02)
(81,2.67379679144385E-02)
(82,2.65957446808511E-02)
(83,2.64550264550265E-02)
(84,2.63157894736842E-02)
(85,0.031413612565445)
(86,0.03125)
(87,3.10880829015544E-02)
(88,3.09278350515464E-02)
(89,3.07692307692308E-02)
(90,3.06122448979592E-02)
(91,3.04568527918782E-02)
(92,3.03030303030303E-02)
(93,3.01507537688442E-02)
(94,0.03)
(95,2.98507462686567E-02)
(96,2.97029702970297E-02)
(97,2.95566502463054E-02)
(98,2.94117647058824E-02)
(99,3.41463414634146E-02)
(100,3.39805825242718E-02)
(101,3.38164251207729E-02)
(102,3.36538461538462E-02)
(103,3.34928229665072E-02)
(104,3.33333333333333E-02)
(105,0.033175355450237)
(106,3.30188679245283E-02)
(107,3.28638497652582E-02)
(108,3.27102803738318E-02)
(109,3.25581395348837E-02)
(110,3.24074074074074E-02)
(111,0.032258064516129)
(112,3.21100917431193E-02)
(113,3.19634703196347E-02)
(114,3.63636363636364E-02)
(115,3.61990950226244E-02)
(116,0.036036036036036)
(117,3.58744394618834E-02)
(118,3.57142857142857E-02)
(119,3.55555555555556E-02)
(120,3.53982300884956E-02)
(121,3.52422907488987E-02)
(122,3.50877192982456E-02)
(123,3.49344978165939E-02)
(124,3.47826086956522E-02)
(125,3.46320346320346E-02)
(126,3.44827586206897E-02)
(127,3.43347639484979E-02)
(128,3.84615384615385E-02)
(129,3.82978723404255E-02)
(130,0.038135593220339)
(131,3.79746835443038E-02)
(132,3.78151260504202E-02)
(133,3.76569037656904E-02)
(134,0.0375)
(135,0.037344398340249)
(136,3.71900826446281E-02)
(137,0.037037037037037)
(138,3.68852459016393E-02)
(139,0.036734693877551)
(140,3.65853658536585E-02)
(141,3.64372469635628E-02)
(142,4.03225806451613E-02)
(143,4.01606425702811E-02)
(144,0.04)
(145,3.98406374501992E-02)
(146,3.96825396825397E-02)
(147,3.95256916996047E-02)
(148,3.93700787401575E-02)
(149,3.92156862745098E-02)
(150,0.0390625)
};

\addplot[blue,dashed,very thick] coordinates {
(50,1.92307692307692E-02)
(51,1.91082802547771E-02)
(52,1.89873417721519E-02)
(53,1.88679245283019E-02)
(54,0.01875)
(55,1.86335403726708E-02)
(56,1.85185185185185E-02)
(57,1.84049079754601E-02)
(58,1.82926829268293E-02)
(59,1.81818181818182E-02)
(60,1.80722891566265E-02)
(61,2.39520958083832E-02)
(62,2.38095238095238E-02)
(63,2.36686390532544E-02)
(64,2.35294117647059E-02)
(65,2.33918128654971E-02)
(66,2.32558139534884E-02)
(67,0.023121387283237)
(68,2.29885057471264E-02)
(69,2.28571428571429E-02)
(70,2.27272727272727E-02)
(71,2.25988700564972E-02)
(72,2.24719101123596E-02)
(73,2.23463687150838E-02)
(74,2.22222222222222E-02)
(75,2.20994475138122E-02)
(76,0.021978021978022)
(77,2.18579234972678E-02)
(78,2.71739130434783E-02)
(79,0.027027027027027)
(80,2.68817204301075E-02)
(81,2.67379679144385E-02)
(82,2.65957446808511E-02)
(83,2.64550264550265E-02)
(84,2.63157894736842E-02)
(85,2.61780104712042E-02)
(86,2.60416666666667E-02)
(87,2.59067357512953E-02)
(88,2.57731958762887E-02)
(89,2.56410256410256E-02)
(90,2.55102040816327E-02)
(91,2.53807106598985E-02)
(92,2.52525252525253E-02)
(93,2.51256281407035E-02)
(94,0.03)
(95,2.98507462686567E-02)
(96,2.97029702970297E-02)
(97,2.95566502463054E-02)
(98,2.94117647058824E-02)
(99,2.92682926829268E-02)
(100,0.029126213592233)
(101,2.89855072463768E-02)
(102,2.88461538461538E-02)
(103,2.87081339712919E-02)
(104,2.85714285714286E-02)
(105,0.028436018957346)
(106,2.83018867924528E-02)
(107,0.028169014084507)
(108,2.80373831775701E-02)
(109,0.027906976744186)
(110,3.24074074074074E-02)
(111,0.032258064516129)
(112,3.21100917431193E-02)
(113,3.19634703196347E-02)
(114,3.18181818181818E-02)
(115,3.16742081447964E-02)
(116,3.15315315315315E-02)
(117,0.031390134529148)
(118,0.03125)
(119,3.11111111111111E-02)
(120,3.09734513274336E-02)
(121,3.08370044052863E-02)
(122,3.07017543859649E-02)
(123,3.05676855895196E-02)
(124,3.04347826086957E-02)
(125,3.46320346320346E-02)
(126,3.44827586206897E-02)
(127,3.43347639484979E-02)
(128,3.41880341880342E-02)
(129,3.40425531914894E-02)
(130,3.38983050847458E-02)
(131,3.37552742616034E-02)
(132,3.36134453781513E-02)
(133,3.34728033472803E-02)
(134,3.33333333333333E-02)
(135,0.033195020746888)
(136,3.30578512396694E-02)
(137,3.29218106995885E-02)
(138,3.27868852459016E-02)
(139,3.26530612244898E-02)
(140,0.032520325203252)
(141,3.23886639676113E-02)
(142,3.62903225806452E-02)
(143,0.036144578313253)
(144,0.036)
(145,3.58565737051793E-02)
(146,3.57142857142857E-02)
(147,3.55731225296443E-02)
(148,3.54330708661417E-02)
(149,3.52941176470588E-02)
(150,0.03515625)
};
\draw[green,ultra thick] ({axis cs:93,0}|-{rel axis cs:0,0}) -- ({axis cs:93,0}|-{rel axis cs:0,1});
\end{axis}
\end{tikzpicture}
\end{subfigure}
\end{figure}

FIDESZ-KDNP's mandate share always exceed its vote share on party lists, while the other three parties are under-represented. Similarly, formula DVT is the best for FIDESZ-KDNP, PVT is the worst, and NVT means a kind of middle path. Other parties preferences is given by PVT $\succ$ NVT $\succ$ DVT. In the case of the smallest party (LMP) there are some ties between the methods due to its few mandates. Indivisibility of parliamentary seats is also responsible for non-monotinicity of the curves: only FIDESZ-KDNP loses from increasing the role of list votes, however, the allocation of further list seats causes some fluctuation in the plots. It has a larger effect for smaller parties.

On Figure \ref{Fig3a}, the level of two-thirds majority is also drawn (by a horizontal dashdotted line) since some laws require this threshold to be accepted or modified. One can see that the change of transfer formula from PVT to NVT was a crucial step for the incumbent FIDESZ-KDNP to secure a two-thirds (super)majority in the 2014 elections.\footnote{~The effects of the modified calculation of correction votes were highlighted by some Hungarian commentators, see, for example, \url{http://igyirnankmi.blog.hu/2014/04/09/ketharmad_csak_a_gyozteskompenzacioval} and \url{http://ideaintezet.blog.hu/2014/04/13/dontott_a_toredek}.}
According to the former rule PVT, it could be achieved only if the number of list seats do not exceed $75$.

An interesting point is that the $93$th list seat was allocated to FIDESZ-KDNP, with $92$ list seats the alliance has exactly a two-third share. On the other side, the $94$th list seat should be given to LMP, eliminating the two-third majority of FIDESZ-KDNP and increasing by $20$\% the number of mandates of LMP. It shows that, especially for small parties, a lot of randomness may influence whether a candidate gets a mandate or not.\footnote{~A possible solution to seat's indivisibility can be time-sharing, similarly to the stable roommates problem \citep{GaleShapley1962, Tan1991}. We think it may be worth to introduce partial mandates, namely, a candidate can get a mandate only for one or two years. Naturally, it is impossible to swap parliamentary members too often but the doubling or quadrupling of mandates still can significantly improve on this disproportion.}

Table \ref{Table4} shows the number of mandates for each party under different vote transfer formulas if there are $93$ list seats as specified by the election law. NVT gives approximately the average of DVT and PVT.

\begin{table}[!ht]
  \centering
  \caption{Seat allocation with $93$ list mandates, 2014 Hungarian parliamentary election}
  \label{Table4}
  \noindent\makebox[\textwidth]{
    \begin{tabularx}{\textwidth}{l CCC}
    \toprule
    Party  & DVT & PVT & NVT \\ \midrule
    FIDESZ-KDNP & 140   & 127   & 133 \\
    MSZP-EGYÜTT-DK-PM-MLP & 35    & 41    & 38 \\
    JOBBIK & 19    & 25    & 23 \\
    LMP   & 5     & 6     & 5 \\ \bottomrule
    \end{tabularx} }
\end{table}

Finally, we have examined whether a trade-off exists between the correction mechanism and the share (number) of list seats. Difference of seat allocations is measured by the formula $\sum_i(m_i - p_i)^2$ where $m_i$ is the mandate share of party $i$ according to the actual scenario and $p_i$ is the mandate share of party $i$ according to the true 2014 result.

These differences are plotted on Figure \ref{Fig4}. All methods have a unique global minimum, which is zero for NVT by definition. Every curve is similar to a parabola, however, a rise of list seats leads to a smaller discrepancy than a drop because of the fixed number of constituencies. Fluctuations are due to the indivisibility of parliamentary seats. In the case of $93$ list seats, the difference of DVT and PVT compared to the officially applied NVT is about the same.

\begin{figure}[htbp]
	\centering
	\caption{Difference of seat allocations compared to the official procedure}
	\label{Fig4}
	
\begin{tikzpicture}
\begin{axis}[width=0.98\textwidth, 
height=0.65\textwidth,
xmin=47.5,
xmax=152.5,
ymin=0,
xlabel = Number of list mandates,
ylabel = Difference of seat allocation,
legend entries={DVT$\quad$,PVT$\quad$,NVT},
legend style={at={(0.5,-0.15)},anchor = north,legend columns = 3},
scaled ticks=false,
yticklabel style={/pgf/number format/.cd,fixed,precision=4}
]

\addplot[black,dotted,very thick] coordinates {
(50,1.42547714736548E-02)
(51,1.28896948325317E-02)
(52,1.32954748694336E-02)
(53,1.23793360986193E-02)
(54,1.27765772171157E-02)
(55,1.15442402515067E-02)
(56,1.02899209586668E-02)
(57,1.06586650215591E-02)
(58,1.10296087620354E-02)
(59,9.9163721946434E-03)
(60,8.80206427234761E-03)
(61,9.14464390491937E-03)
(62,8.15770888879613E-03)
(63,8.48447331315121E-03)
(64,8.8141110693353E-03)
(65,7.8142575366325E-03)
(66,6.92495630665482E-03)
(67,7.2268929617608E-03)
(68,7.53205022183745E-03)
(69,6.64735553472417E-03)
(70,5.84676739839402E-03)
(71,5.34896220525265E-03)
(72,5.61654613085181E-03)
(73,5.88784867135251E-03)
(74,5.17069964251432E-03)
(75,4.45424921226363E-03)
(76,4.6989063694436E-03)
(77,4.0733378063762E-03)
(78,4.3043921839632E-03)
(79,3.67832016991845E-03)
(80,3.90089401696362E-03)
(81,3.34249665593742E-03)
(82,3.55209266073264E-03)
(83,3.00865555403914E-03)
(84,3.20999756504652E-03)
(85,2.7129834484554E-03)
(86,2.90195453774666E-03)
(87,3.09548862958021E-03)
(88,2.80366670289888E-03)
(89,2.36565792431709E-03)
(90,1.93064536998964E-03)
(91,2.09034538470456E-03)
(92,1.72040818932962E-03)
(93,1.86863968081614E-03)
(94,1.49948864927653E-03)
(95,1.64055327663849E-03)
(96,1.3176566631524E-03)
(97,1.44780716071065E-03)
(98,1.13918318114386E-03)
(99,1.26240570331531E-03)
(100,9.82686166696796E-04)
(101,1.09552611331262E-03)
(102,8.42589136921859E-04)
(103,9.48742801036222E-04)
(104,7.08700809852656E-04)
(105,8.04980558182038E-04)
(106,7.00157801373792E-04)
(107,5.14619988265556E-04)
(108,5.92552558217694E-04)
(109,4.09618666588619E-04)
(110,4.81624982520257E-04)
(111,3.29452698962839E-04)
(112,3.92354311651196E-04)
(113,2.54589108307369E-04)
(114,3.11783595079929E-04)
(115,1.90336600257663E-04)
(116,2.38876380074016E-04)
(117,1.42832179249185E-04)
(118,1.85881391046609E-04)
(119,9.2754360548945E-05)
(120,1.2757750376119E-04)
(121,7.01019332437702E-05)
(122,1.67073218947799E-05)
(123,3.26457727785231E-05)
(124,5.20933552259402E-05)
(125,6.61051362578316E-05)
(126,3.95283244657972E-05)
(127,3.83832040189684E-05)
(128,3.430652196968E-05)
(129,3.61084845324873E-05)
(130,3.15503510632608E-05)
(131,2.61604681977343E-05)
(132,4.62196384502109E-05)
(133,3.63714535572109E-05)
(134,5.21477782547582E-05)
(135,3.54642046415546E-05)
(136,8.49407544166729E-05)
(137,1.31889357112143E-04)
(138,9.8545813625556E-05)
(139,7.10872712912294E-05)
(140,1.4785682035866E-04)
(141,2.11840360956605E-04)
(142,1.68235609886033E-04)
(143,2.6471615533313E-04)
(144,2.20222317618241E-04)
(145,3.02920041727113E-04)
(146,4.19265474951899E-04)
(147,3.59313029343689E-04)
(148,3.05263836203793E-04)
(149,4.02580135156529E-04)
(150,3.42857102872639E-04)
};

\addplot[red,very thick] coordinates {
(50,4.43000662376435E-03)
(51,3.73605205986266E-03)
(52,3.99565096220312E-03)
(53,3.30306196262147E-03)
(54,2.72161490555794E-03)
(55,2.94563594956797E-03)
(56,2.58181787657443E-03)
(57,2.10749712289912E-03)
(58,1.60562985500774E-03)
(59,1.78129829437992E-03)
(60,1.40151091768054E-03)
(61,1.00937831237203E-03)
(62,1.15094889836779E-03)
(63,8.57669429231065E-04)
(64,5.66297607562774E-04)
(65,6.74492503058626E-04)
(66,4.60440651037637E-04)
(67,2.61664870017378E-04)
(68,3.37270669431476E-04)
(69,1.95832613998101E-04)
(70,1.41588611899566E-04)
(71,1.90545396512986E-04)
(72,8.38961367835035E-05)
(73,2.78944261713692E-05)
(74,4.63122735708449E-05)
(75,3.79182979012081E-05)
(76,1.9671528410763E-05)
(77,8.40146111859795E-06)
(78,5.64274073390586E-05)
(79,1.06920439344477E-04)
(80,6.68318109166225E-05)
(81,1.66675889489726E-04)
(82,2.80585737251491E-04)
(83,2.12558104781713E-04)
(84,3.60013573064448E-04)
(85,4.92787322406994E-04)
(86,6.81379003827063E-04)
(87,5.75168701933171E-04)
(88,7.77914603043667E-04)
(89,6.62203109149864E-04)
(90,8.87319620040388E-04)
(91,1.12931875654974E-03)
(92,9.88506665590118E-04)
(93,1.26259437892983E-03)
(94,1.54069467942729E-03)
(95,1.37562128020425E-03)
(96,1.68391300120971E-03)
(97,2.00630349347496E-03)
(98,1.81779732066114E-03)
(99,2.06212111559384E-03)
(100,2.40936902011934E-03)
(101,2.78335098764344E-03)
(102,2.56254245380212E-03)
(103,2.93803957450315E-03)
(104,2.71054677436912E-03)
(105,3.10686982527625E-03)
(106,3.5083425485343E-03)
(107,3.25994749866022E-03)
(108,3.69084601695167E-03)
(109,4.11619916236115E-03)
(110,3.84764920514229E-03)
(111,4.31058070515148E-03)
(112,4.75788197521944E-03)
(113,4.46990685129348E-03)
(114,4.79307434635697E-03)
(115,5.26577418827145E-03)
(116,5.76423060840257E-03)
(117,5.44918052699245E-03)
(118,5.94049950713727E-03)
(119,6.46535812008738E-03)
(120,6.13262033873644E-03)
(121,6.64102247942399E-03)
(122,6.3043010306664E-03)
(123,6.84058374834888E-03)
(124,7.36464856986245E-03)
(125,7.0113088444975E-03)
(126,7.57049469018261E-03)
(127,8.1089076844402E-03)
(128,8.48880782533066E-03)
(129,8.11197900963423E-03)
(130,8.69686052733965E-03)
(131,9.25029749687581E-03)
(132,8.85837589320751E-03)
(133,9.42699854429882E-03)
(134,1.00277200842291E-02)
(135,9.62128002813325E-03)
(136,1.02009191849204E-02)
(137,1.08192062783881E-02)
(138,1.03988041599122E-02)
(139,1.09884814056448E-02)
(140,1.05661323476415E-02)
(141,1.11892140757335E-02)
(142,1.15969975256041E-02)
(143,1.21964413609324E-02)
(144,1.17545338754072E-02)
(145,1.23955732731227E-02)
(146,1.30031283519471E-02)
(147,1.25488677174368E-02)
(148,1.32037820955519E-02)
(149,1.38186802435416E-02)
(150,1.33525466639593E-02)
};

\addplot[blue,dashed,very thick] coordinates {
(50,8.13360019979705E-03)
(51,7.08844630827622E-03)
(52,7.42531329405143E-03)
(53,6.49253887528891E-03)
(54,6.81159606133684E-03)
(55,5.89972199260675E-03)
(56,5.09016819392862E-03)
(57,5.37650504027591E-03)
(58,4.64805793831542E-03)
(59,3.9039408921253E-03)
(60,4.15770734472795E-03)
(61,3.72942193108952E-03)
(62,3.97615770886017E-03)
(63,3.36521785023388E-03)
(64,2.76344250158305E-03)
(65,2.97868471338353E-03)
(66,2.46342474402828E-03)
(67,1.97023108980286E-03)
(68,2.15438902644311E-03)
(69,1.72708069111839E-03)
(70,1.89655403820465E-03)
(71,1.48752793119796E-03)
(72,1.14113283155645E-03)
(73,1.28068627154411E-03)
(74,1.4269810495495E-03)
(75,1.09445092287093E-03)
(76,8.02173580999511E-04)
(77,9.19523406057841E-04)
(78,7.69171182927774E-04)
(79,5.50041816867592E-04)
(80,3.43390927461304E-04)
(81,4.21316359994036E-04)
(82,2.66083620511158E-04)
(83,3.31852260395031E-04)
(84,1.84899776936996E-04)
(85,2.43190884858931E-04)
(86,1.27666430543982E-04)
(87,5.01132925336247E-05)
(88,8.20129457509347E-05)
(89,2.12600638774799E-05)
(90,4.22990151416772E-05)
(91,1.42067487796922E-05)
(92,4.09270721544759E-06)
(93,0)
(94,3.61720663619603E-05)
(95,3.00652532438715E-05)
(96,6.4207769827824E-05)
(97,9.59533735872574E-05)
(98,6.59585419927098E-05)
(99,1.32903343210211E-04)
(100,2.28846272532789E-04)
(101,1.75619337725429E-04)
(102,2.83209397990518E-04)
(103,2.21220663897382E-04)
(104,3.52725924085955E-04)
(105,2.85548646484108E-04)
(106,4.13927790514945E-04)
(107,5.91488272466874E-04)
(108,5.02575201787537E-04)
(109,6.66174566694888E-04)
(110,8.23914275349286E-04)
(111,7.18323938295692E-04)
(112,9.37120056036788E-04)
(113,1.14182590109934E-03)
(114,1.01592413247767E-03)
(115,1.27394715523357E-03)
(116,1.14389657071148E-03)
(117,1.36271025005123E-03)
(118,1.22603913127838E-03)
(119,1.46991175244671E-03)
(120,1.7544761917347E-03)
(121,1.59883079168724E-03)
(122,1.86922939671951E-03)
(123,1.70746405863583E-03)
(124,2.01331724404622E-03)
(125,2.24487439719116E-03)
(126,2.07183611789191E-03)
(127,2.35818460932137E-03)
(128,2.70361772110951E-03)
(129,2.51313826903672E-03)
(130,2.82218407122974E-03)
(131,2.62653258640417E-03)
(132,2.98904619266043E-03)
(133,3.31918491865234E-03)
(134,3.10700539969529E-03)
(135,3.49658101785311E-03)
(136,3.28188991521966E-03)
(137,3.61813461922348E-03)
(138,3.39894325906291E-03)
(139,3.75363997650828E-03)
(140,4.15726203268509E-03)
(141,3.92263646971097E-03)
(142,4.20147412295084E-03)
(143,4.57750118984262E-03)
(144,4.33198111158809E-03)
(145,4.75383447644255E-03)
(146,5.14590952918629E-03)
(147,4.88600058450926E-03)
(148,5.32945478910831E-03)
(149,5.06796781944495E-03)
(150,5.46260255866724E-03)
};
\draw[green,ultra thick] ({axis cs:93,0}|-{rel axis cs:0,0}) -- ({axis cs:93,0}|-{rel axis cs:0,1});
\end{axis}
\end{tikzpicture}
\end{figure}

Scenarios resulting in minimal difference are detailed in Table \ref{Table5}. Thus party structure remains almost unchanged if formula PVT is applied with $77$ list seats or DVT is used with $122$ list seats.
Since an important motive behind the electoral reform was the reduction of the size of the parliament (visible in a cut of the number of constituencies), it is probable that the change of vote transfer formula PVT to NVT was other causes.

\begin{table}[htbp]
  \centering
  \caption{Seat allocation in different scenarios, 2014 Hungarian parliamentary election}
  \label{Table5}
  \noindent\makebox[\textwidth]{
    \begin{tabularx}{1\textwidth}{l CC >{\bfseries}C C} \toprule
    Party & DVT   & PVT   & NVT   & PVT \\ \midrule
    FIDESZ-KDNP & 153   & 122   & 133   & 138 \\
    MSZP-EGYÜTT-DK-PM-MLP & 43    & 35    & 38    & 52 \\
    JOBBIK & 26    & 21    & 23    & 35 \\
    LMP   & 6     & 5     & 5     & 8 \\ \midrule
    Number of list seats & 122   & 77    & 93    & 127 \\
    Number of mandates & 228   & 183   & 199   & 233 \\ \bottomrule
    \end{tabularx} }
\end{table}

The last column of Table \ref{Table5} presents another scenario which is close to the electoral system used between 1990 and 2010 with $233$ mandates ($106/233 \approx 176/386$) under the rule DVT.\footnote{~Note that it does not correspond to a hypothetical seat allocation under the former electoral law on the basis of 2014 results since there were other changes in the system.}
It leads to a parliament much moderately dominated by FIDESZ-KDNP since a substantial reduction of the ratio of constituency seats as well as the use of the vote transfer formula most favorable for small parties. It is worth to compare it to the variant with almost the same number of mandates ($228$) under DVT. 

\section{Summary} \label{Sec6}

A framework of an electoral system has been presented such that any outcome between majoritarian and proportional rules is achievable. It is based on vote transfer, incorporating single-member constituencies and a compensatory mechanism with three different methods, DVT, PVT and NVT for calculating list votes.
Investigation of the model have led to some interesting theoretical and practical results.

In a two-party system there exists a trade-off between certain correction rules. The party with a majority of votes has an incentive to increase the share of constituency seats in order to secure a bigger majority. However, it is a risky strategy: if the distribution of votes among the districts is favorable for the other party, it may lead to a loss of the election. It is true according to simulations as well as to exact mathematical formulas derived in extreme scenarios when a party gerrymanders district boundaries.
In order to mitigate this danger the dominant party has a tool, that is, to choose the vote transfer mechanism NVT.

\begin{observation} \label{Obs1}
In a two-party system NVT may be preferred over DVT by the dominant party due to an increased chance of obtaining majority despite the smaller expected vote share.
\end{observation}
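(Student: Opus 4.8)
The plan is to read Observation~\ref{Obs1} as a qualitative \emph{trade-off} claim and to establish it by combining a per-run decomposition of the seat share with the two extreme-case computations of Section~\ref{Sec31} and the Monte-Carlo evidence of Section~\ref{Sec32}. The key starting point is that, for any realized territorial distribution of votes, the constituency component of party $A$'s seat share is identical under DVT and NVT: both allocate the single-member seats by plurality, independently of the transfer rule. Hence the entire difference lives in the list tier, and, writing $S_{\mathrm{DVT}}$ and $S_{\mathrm{NVT}}$ for the total seat shares of the dominant party on a given configuration, one has $S_{\mathrm{DVT}}-S_{\mathrm{NVT}}=(1-\alpha)\bigl(x-\mathrm{NVT}_A\bigr)$, where $x$ is the DVT list share and $\mathrm{NVT}_A$ the normalized NVT list share. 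The sign of the comparison is therefore governed solely by whether $\mathrm{NVT}_A$ falls below or above the national share $x$.

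Next I would pin down that sign. Writing $\ell_A,\ell_B$ for the two parties' losing votes, the unnormalized NVT list totals collapse to the clean forms $L_A=2x-\ell_B$ and $L_B=2(1-x)-\ell_A$, so that $x>\mathrm{NVT}_A=L_A/(L_A+L_B)$ holds exactly when $x\,\ell_A<(1-x)\,\ell_B$. The two regimes are then immediate. When the map is favorable to $A$ (it carries almost every district, so $\ell_A\approx 0$) the criterion holds and DVT gives the larger list share, recovering Proposition~\ref{Prop1}; when the map is adverse (party $B$ takes many districts narrowly while $A$ is confined to a few strongholds, so $\ell_A$ is large) the inequality reverses and NVT lifts $A$ above $x$, recovering Proposition~\ref{Prop2}. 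The mechanism is transparent: NVT credits the votes $A$ casts in districts it loses, a quantity negligible on a favorable map but substantial on an adverse one.

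With both regimes in hand the two halves of the observation follow. For the expected seat share: when $x$ sits just above $\tfrac12$ a random configuration is favorable with high probability, so the favorable regime dominates and $\mathbb{E}[S_{\mathrm{DVT}}]>\mathbb{E}[S_{\mathrm{NVT}}]$ (the two curves plotted in Figure~\ref{Fig2}). For the majority probability: the event that $A$ loses its majority can occur only on adverse configurations, which is precisely where $\mathrm{NVT}_A>x$ and the NVT credit can push $S_{\mathrm{NVT}}$ back across $\tfrac12$; hence $\Pr(S_{\mathrm{NVT}}>\tfrac12)>\Pr(S_{\mathrm{DVT}}>\tfrac12)$. Since the statement only asserts that NVT \emph{may} be preferred, it suffices to exhibit one parameter regime in which both strict inequalities hold simultaneously, and the runs tabulated in Table~\ref{Table1} and the bars of Figure~\ref{Fig2} (for example $\alpha=0.5$, small $h$, and $k$ a little above $0.5$) supply exactly such a regime.

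The main obstacle is that the second half is inherently probabilistic and, as noted at the start of Section~\ref{Sec32}, admits no closed form: the joint law of the fraction of districts won and of $\mathrm{NVT}_A$ depends on the full distribution of the per-district shares, so the strict gain in majority probability cannot be read off a single algebraic identity. Making the argument fully rigorous would require either constructing an explicit distribution over configurations for which both $\mathbb{E}[S_{\mathrm{DVT}}]>\mathbb{E}[S_{\mathrm{NVT}}]$ and $\Pr(S_{\mathrm{NVT}}>\tfrac12)>\Pr(S_{\mathrm{DVT}}>\tfrac12)$ are verifiable by hand, or delegating the existence claim to the simulations; the delicate point in any direct construction is the near-$\tfrac12$ boundary, where the NVT correction must be neither so large that $A$ wins regardless nor so small that it fails to flip the outcome.
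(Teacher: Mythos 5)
Your algebraic groundwork is correct and sharper than anything the paper actually contains: the decomposition $S_{\mathrm{DVT}}-S_{\mathrm{NVT}}=(1-\alpha)\bigl(x-\mathrm{NVT}_A\bigr)$, the collapse $L_A=2x-\ell_B$, $L_B=2(1-x)-\ell_A$, and the sign criterion $x\,\ell_A<(1-x)\,\ell_B$ all check out, and they have no counterpart in the paper. Be aware, though, that the paper never proves Observation~\ref{Obs1} deductively either: it is an observation whose entire support is the Monte-Carlo study of Section~\ref{Sec32} (Table~\ref{Table1}, Figure~\ref{Fig2}) plus the verbal argument that local fluctuations can cost the dominant party districts while NVT lets it recoup its large winning margins. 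So your closing move --- delegating the existence of a regime with both $\mathbb{E}[S_{\mathrm{DVT}}]>\mathbb{E}[S_{\mathrm{NVT}}]$ and $\Pr(S_{\mathrm{NVT}}>\tfrac12)>\Pr(S_{\mathrm{DVT}}>\tfrac12)$ to the simulations --- is exactly the paper's own justification, and since the statement only asserts that NVT \emph{may} be preferred, that is a legitimate endpoint.

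The genuine gap is the word ``hence'' in your majority-probability step. What your regime analysis yields is one-sided: if $S_{\mathrm{DVT}}\le\tfrac12$ then (using $x>\tfrac12$) the fraction of districts won satisfies $c_A<x$, and since your criterion can be rewritten via $\ell_B=\ell_A+c_A-x$ as $(2x-1)\ell_A<(1-x)(c_A-x)$, the configuration is forced to be adverse, so $S_{\mathrm{NVT}}\ge S_{\mathrm{DVT}}$ there. This does \emph{not} give the inclusion $\{S_{\mathrm{DVT}}>\tfrac12\}\subseteq\{S_{\mathrm{NVT}}>\tfrac12\}$, and without that inclusion no ordering of probabilities follows: writing $W_D,W_N$ for the two majority events, $\Pr(W_N)>\Pr(W_D)$ holds iff $\Pr(W_N\cap W_D^c)>\Pr(W_N^c\cap W_D)$, and your pointwise facts control neither term. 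Indeed the inclusion is false in the paper's own data: the rows ``Under DVT and PVT'' and ``Under DVT'' of Table~\ref{Table1} record four runs at $k=0.51$ in which DVT secures a majority and NVT does not --- these live in your favorable regime, where nothing prevents $S_{\mathrm{NVT}}<S_{\mathrm{DVT}}$ from dipping below one half. (Your expectation step has the same character: ``favorable with high probability'' does not by itself bound the conditional magnitudes, so it too is heuristic.) The probability inequality is therefore irreducibly quantitative and, in this paper, empirical; your final paragraph concedes exactly this, but the body of the proposal presents as a deduction something that is not one, and that is the step a referee would strike.
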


The three transfer rules vary considerably with respect to their potential for manipulation. NVT is more immune to strategic behavior than PVT, while it is practically impossible under DVT.

\begin{observation} \label{Obs2}
From a strict mathematical perspective, the simplest method DVT is preferred to use in order to avoid the possibility of strategic manipulation arising from mixing constituency and list votes.
\end{observation}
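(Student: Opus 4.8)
The plan is to treat Observation \ref{Obs2} as an \emph{invariance} claim: under DVT a party's list-vote total is completely decoupled from the constituency tier, and this single structural fact rules out every manipulation channel that exploits the mixing of the two tiers. I would therefore begin by isolating the two strategies identified in Section \ref{Sec34} that genuinely arise from this mixing, namely vote-transferring (deliberately forfeiting a district so that its votes feed the proportional pool) and stronghold-splitting (fielding a clone candidate whose losing votes are recycled into list votes). Party-splitting can be set aside at once, since the single-vote assumption of the model means a ballot is counted only once and cannot be duplicated across lists.

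The key step is to establish the invariance lemma for DVT: by definition DVT adds no correction votes, so a party's list-vote total equals the aggregate of its candidates' first-tier votes, \emph{independently} of which candidates win, of the margins of victory, and of how many candidates the party presents in a given constituency. From this I would draw the two immunity conclusions directly. For vote-transferring, forfeiting a district under DVT transfers nothing to the list pool, because the same votes already appear there; the party loses a single-member mandate for no compensating gain, so the strategy is strictly dominated. For stronghold-splitting, replacing one safe candidate by a winner plus a losing clone leaves the party's aggregate vote in that district unchanged and still yields exactly one single-member mandate, and since the clone's votes are not recycled the list allocation is untouched, so the manoeuvre is pointless.

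To show that DVT is the \emph{unique} robust choice rather than merely one robust choice, I would contrast it with the other two formulas using the characterisations already available. Under PVT the losing clone's votes are added to the list, so stronghold-splitting strictly pays, and vote-transferring pays whenever the price of a compensatory mandate falls below that of a district mandate, which (as noted in Section \ref{Sec34}) requires $\alpha < 0.5$; under NVT the winner's surplus is recycled automatically, which neutralises stronghold-splitting but still leaves room for vote-transferring, albeit harder than under PVT because NVT inflates the list pool. Collecting these cases shows that DVT alone closes every tier-mixing channel, which is the content of the observation.

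I expect the only real subtlety to be definitional rather than computational: pinning down precisely what counts as ``manipulation arising from mixing constituency and list votes'' so that party-splitting is legitimately excluded and the two remaining strategies are exhaustive. Once the scope is fixed, the invariance lemma makes the DVT immunity almost immediate, and the comparative part reuses the list-vote formulas of the gerrymandering analysis, so no heavy calculation is required.
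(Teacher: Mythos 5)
Your proposal is correct and follows essentially the same route as the paper's Section \ref{Sec34}: the same three strategies (party-splitting excluded by the one-vote feature, vote-transferring ruled out under DVT because forfeiting a district yields no list gain, stronghold-splitting paying only under PVT), and the same comparative conclusion that DVT is the most robust while PVT is the most vulnerable. Your explicit ``invariance lemma'' is just a tidier packaging of the paper's own remark that under DVT list votes are unaffected by the distribution of votes among constituencies, so nothing substantively new is added or missing.
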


Nevertheless, keep in mind that DVT immediately excludes to provide for a fully proportional outcome, which is possible under PVT and NVT, despite they (especially NVT) require a large number of compensation seats for it.

A scrutiny of 2014 Hungarian parliamentary election results reveals that the rule applied for the calculation of correction votes cannot be evaluated in itself, only together with the share of list seats in the system.

\begin{observation} \label{Obs3}
DVT, PVT and NVT may be functionally equivalent with an appropriate choice of the share of constituency mandates. Then, if the number of single member districts is fixed, NVT requires the fewest parliamentary seats.
\end{observation}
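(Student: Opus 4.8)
The plan is to read Observation~\ref{Obs3} as two linked claims -- the existence of a common (functionally equivalent) outcome, and the minimality of the seat total realizing it -- and to reduce both to the elementary geometry of how mandate shares depend on $\alpha$. First I would fix the election data (the district winners together with all vote counts) and record that, for each formula $X\in\{\text{DVT},\text{PVT},\text{NVT}\}$, the mandate share of party $i$ is the affine interpolation
\[
m_i^X(\alpha)=\alpha\,c_i+(1-\alpha)\,\ell_i^X,
\]
where $c_i$ is the constituency share -- identical across the three formulas, since the district winners do not depend on the transfer rule -- and $\ell_i^X$ is the normalized list share. Thus the three outcome curves are segments emanating from the common endpoint $c=(c_1,\dots,c_n)$ at $\alpha=1$ and reaching the formula-specific vector $\ell^X$ at $\alpha=0$. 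The only structural input I need is that the total transferred votes are ordered $V_{\text{DVT}}\le V_{\text{PVT}}\le V_{\text{NVT}}$ (PVT adds the losing votes, NVT adds in addition the winning surpluses, both nonnegative), together with the positions of $\ell^X$ for the dominant party supplied by Proposition~\ref{Prop1} and Proposition~\ref{Prop2} in the two gerrymandering extremes.

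Next I would establish functional equivalence. With the number of districts $D$ fixed, the number of list seats $L=D(1-\alpha)/\alpha$ is a strictly decreasing bijection of $\alpha\in(0,1]$, so sweeping $L$ is the same as sweeping $\alpha$. To make the three formulas coincide at a target $T$ it suffices that $T$ lie on all three segments; solving $m_i^X(\alpha_X)=T_i$ gives $\alpha_X=(T_i-\ell_i^X)/(c_i-\ell_i^X)$. In a two-party system this is a single scalar equation, always solvable on the overlap of the three ranges, so equivalence is exact; for more parties I would read ``functionally equivalent'' in the least-squares sense already used for Figure~\ref{Fig4} and Table~\ref{Table5}, where the minimizing $\alpha_X$ bring the three seat vectors into near-coincidence.

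For the seat count I would note that fewest parliamentary seats $\Leftrightarrow$ fewest list seats $\Leftrightarrow$ largest $\alpha$, and differentiate the solution above, obtaining for the dominant party $\partial\alpha_X/\partial\ell^X=(T-c)/(c-\ell^X)^2$. The sign is that of $T-c$. When the dominant party is \emph{under}-represented by the single-member districts (the Proposition~\ref{Prop2} regime, $c<T$), $\alpha$ increases with the list share, so the formula that pushes the dominant party's list share furthest up requires the largest $\alpha$; in that regime its losing votes are large and NVT additionally books its winning surpluses, which is precisely why the ordering $\text{NVT}\succ\text{PVT}\succ\text{DVT}$ of the list share holds, making NVT realize the common outcome with the fewest seats.

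The main obstacle -- and the point I would be most careful about -- is exactly this sign. In the opposite regime ($c>T$, as for FIDESZ--KDNP in $2014$, which wins $96$ of $106$ districts) the monotonicity reverses and PVT becomes the most economical formula, which is what Table~\ref{Table5} actually exhibits. Hence the clean ``NVT is fewest'' conclusion is valid in the majority-diluting regime the paper emphasises, and the remaining work is to state that hypothesis precisely and to absorb the indivisibility of seats, which turns the continuous comparison into one valid only up to rounding.
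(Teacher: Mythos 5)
Your route is genuinely different from the paper's. The paper never proves Observation~\ref{Obs3} analytically: its entire support is the empirical exercise of Section~\ref{Sec5}, where for each formula the number of list seats minimizing $\sum_i(m_i-p_i)^2$ against the official 2014 outcome is computed (Figure~\ref{Fig4}), and the resulting ``functionally equivalent'' scenarios are collected in Table~\ref{Table5}. Your affine model $m_i^X(\alpha)=\alpha c_i+(1-\alpha)\ell_i^X$, the bijection $L=D(1-\alpha)/\alpha$, and the sign computation $\partial\alpha_X/\partial\ell^X=(T-c)/(c-\ell^X)^2$ are all correct, and they formalize exactly the mechanism behind that table; as a treatment of the first sentence of the Observation (functional equivalence), your argument is fine and more general than the paper's.

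The gap concerns the second sentence, and you have half-noticed it yourself. Your analysis yields ``NVT requires the fewest seats'' only under the hypothesis $c<T$, i.e.\ when the dominant party is \emph{under}-represented in the single-member districts (the Proposition~\ref{Prop2} regime). But the paper asserts Observation~\ref{Obs3} as a conclusion drawn from the 2014 Hungarian election, where FIDESZ-KDNP wins $96$ of $106$ districts -- exactly the opposite regime $c>T$; there, as you correctly compute and as Table~\ref{Table5} itself records, PVT is the cheapest ($183$ seats, versus $199$ for NVT and $228$ for DVT). Your closing claim that the ``NVT is fewest'' conclusion holds ``in the majority-diluting regime the paper emphasises'' is therefore not a proof of the paper's statement but a silent replacement of its hypothesis: the paper nowhere restricts the Observation to $c<T$, and its own supporting evidence lives entirely in $c>T$ (indeed, it then asks whether the Observation generalizes to multi-party settings, not to the other gerrymandering regime). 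What your computation actually exposes is that the second sentence, read as a comparison of the three functionally equivalent scenarios, conflicts with Table~\ref{Table5}; it can only be squared with the paper if ``fewest'' is read as the looser comparison of the actual NVT system ($199$ seats) with the DVT equivalent ($228$) and with the old-system-like PVT variant in the last column ($233$), which is not functionally equivalent to the other two. Either way, you should present the regime restriction as your own correction rather than attribute it to the paper, and state explicitly that in the paper's own data the ordering is PVT $<$ NVT $<$ DVT.
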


It remains an open question whether Observation \ref{Obs3} can be generalized, for example, if there are parties with a strong regional but modest national presence, so they are able to win some constituencies despite a small share in national votes. It may be the case in some socially or ethnically divided countries such as Afghanistan, Canada, Great Britain or Ukraine.

While it is apparent that the incumbent party FIDESZ-KDNP has benefited in the 2014 parliamentary election from the change of transfer vote formula (NVT instead of PVT) by the Constitution of 2012, it is clear that pure theory could not decide which rule is the best. It depends on whether the election system should be proportional or should create the conditions for a stable government.


\end{document}